\DeclareMathOperator{\lw}{\lambda} \DeclareMathOperator{\dist}{d}
\DeclareMathOperator{\argmax}{argmax}
\DeclareMathOperator{\length}{len}
\DeclareMathOperator{\maxlabel}{opt-cand}
\newcommand{\ml}{\ensuremath{\maxlabel}\xspace}
\newcommand{\OPT}{\ensuremath{\mathcal L}\xspace}%{\texttt{OPT}}
\newcommand{\MaxTotalCovering}
{\textsc{Max\-I\-den\-ti\-fied\-Roads}\xspace}
\newcommand{\OPTCAND}{\texttt{Opt\-Can\-di\-da\-te}\xspace}
\newcommand{\weighting}{\ensuremath{\overline \omega}\xspace}
\newcommand{\weight}{\ensuremath{\omega}\xspace}
\title{Label Placement in Road Maps}
\author{Andreas Gemsa \and Benjamin Niedermann % \footnotemark[1]
  \and Martin N\"ollenburg % \footnotemark[1]
} \institute{Institute of Theoretical Informatics, Karlsruhe Institute
  of Technology, Germany }
\begin{document}
\maketitle

\begin{abstract} 
	
  A road map can be interpreted as a graph embedded in the plane, in which each
  vertex corresponds to a road junction and each edge to a particular road
  section. We consider the cartographic problem to place non-overlapping road
  labels along the edges so that as many road sections as possible are
  identified by their name, i.e., covered by a label. We show that this is
  \NP-hard in general, but the problem can be solved in polynomial time if the
  road map is an embedded tree. 

\end{abstract}

\section{Introduction}

Map labeling is a well-known cartographic problem in computational
geometry~\cite[Chapter 58.3.1]{overview},\cite{bibliography}. Depending on the type of map
features, one can distinguish labeling of \emph{points}, \emph{lines},
and \emph{areas}. Common cartographic quality criteria are that labels
must be disjoint and clearly identify their respective map
features~\cite{criteria}. Most of the previous work concerns point labeling, while
labeling line and area features received considerably less
attention. In this paper we address labeling linear
features, namely roads in a road map.

Geometrically, a \emph{road map} is the representation of a \emph{road
  graph} $G$ as an arrangement of fat curves in the plane $\mathbb
R^2$. Each \emph{road} is a connected subgraph of $G$ (typically a
simple path) and each edge belongs to exactly one road. Roads may
intersect each other in \emph{junctions}, the vertices of $G$, and we
denote an edge connecting two junctions as a \emph{road section}. In
road labeling the task is to place the road names inside the fat
curves so that the road sections are identified unambiguously, see
Fig.~\ref{fig:goodbad}.

Chiri{\'e}~\cite{street-name-placement} presented a set of rules and
quality criteria for label placement in road maps based on interviews
with cartographers.  This includes that (C1) labels are placed inside
and parallel to the road shapes, (C2) every road section between two
junctions should be clearly identified, and (C3) no two road labels
may intersect.  Further, he gave a mathematical description for
labeling a single road and introduced a heuristic for sequentially
labeling all roads in the map. 
Imhof's foundational cartographic work on label positioning in maps lists very similar quality criteria~\cite{imhof}.
Edmondson et al.~\cite{Edmondson96} took an algorithmic
perspective on labeling a single linear feature (such as a river). 
While Edmondson et al.\ considered \emph{non-bent} labels, Wolff et
al.~\cite{wkksa-seahq-00} introduced an algorithm for single linear
feature that places labels following the curvature of the linear
feature.  Strijk \cite{strijk2001} considered static road labeling
with embedded labels and presented a heuristic for selecting
non-overlapping labels out of a set of label candidates.  Seibert and
Unger~\cite{labelingManhattan} considered grid-shaped road networks.
They showed that in those networks it is \NP-complete and \APX-hard to
decide whether for every road at least one label can be placed. Yet,
Neyer and Wagner~\cite{downtownLabeling} introduced a practically
efficient algorithm that finds such a grid labeling if possible.
Maass and Döllner~\cite{Maass07} presented a heuristic for labeling the
roads of an interactive 3D map with objects (such as buildings).
Apart from label-label overlaps, they also resolve label-object
occlusions. Vaaraniem et al.~\cite{Vaaraniemi12} used a force-based
labeling algorithm for 2D and 3D scenes including road label placement.

\paragraph{Contribution.} While in grid-shaped road networks it is sufficient to place a single
label per road to clearly identify all its road sections, this is not
the case in general road networks.  Consider the example in
Fig.~\ref{fig:goodbad}. In Fig.~\ref{fig:goodbad}a), it is not obvious
whether the orange road section in the center belongs to \emph{Knuth
  St.}\ or to~\emph{Turing St}. Simply maximizing the number of placed
labels, as often done for labeling point features, can cause undesired
effects like unnamed roads or clumsy label placements (e.g., around
\emph{Dijkstra St.}\ and \emph{Hamming St.} in
Fig.~\ref{fig:goodbad}a)).  Therefore, in contrast to Seibert and
Unger~\cite{labelingManhattan}, we aim for maximizing the number
of~\emph{identified} road sections, i.e., road sections that can be
clearly assigned to labels; see Fig.~\ref{fig:goodbad}b).  

Based on criteria (C1)--(C3) we introduce a new and versatile model for road
labeling in Section~\ref{sec:model}. In Section~\ref{sec:np-hardness} we show
that the problem of maximizing the number of identified road sections
is~\NP-hard for general road graphs, even if each road is a path. For the
special case that the road graph is a tree, we present a polynomial-time
algorithm in Section~\ref{sec:tree-algorithm}. This special case is not only of
theoretical interest, but our algorithm in fact provides a very useful
subroutine in exact or heuristic algorithms for labeling general road graphs.
Our initial experiments, sketched in Section~\ref{sec:appendix:experiments},
show that real-world road networks decompose into small subgraphs, a large
fraction of which (more than 85.1\%) are actually trees, and thus can be labeled optimally by our algorithm.

\begin{figure}[t]
  \centering
  \includegraphics[page=3,scale=0.95]{./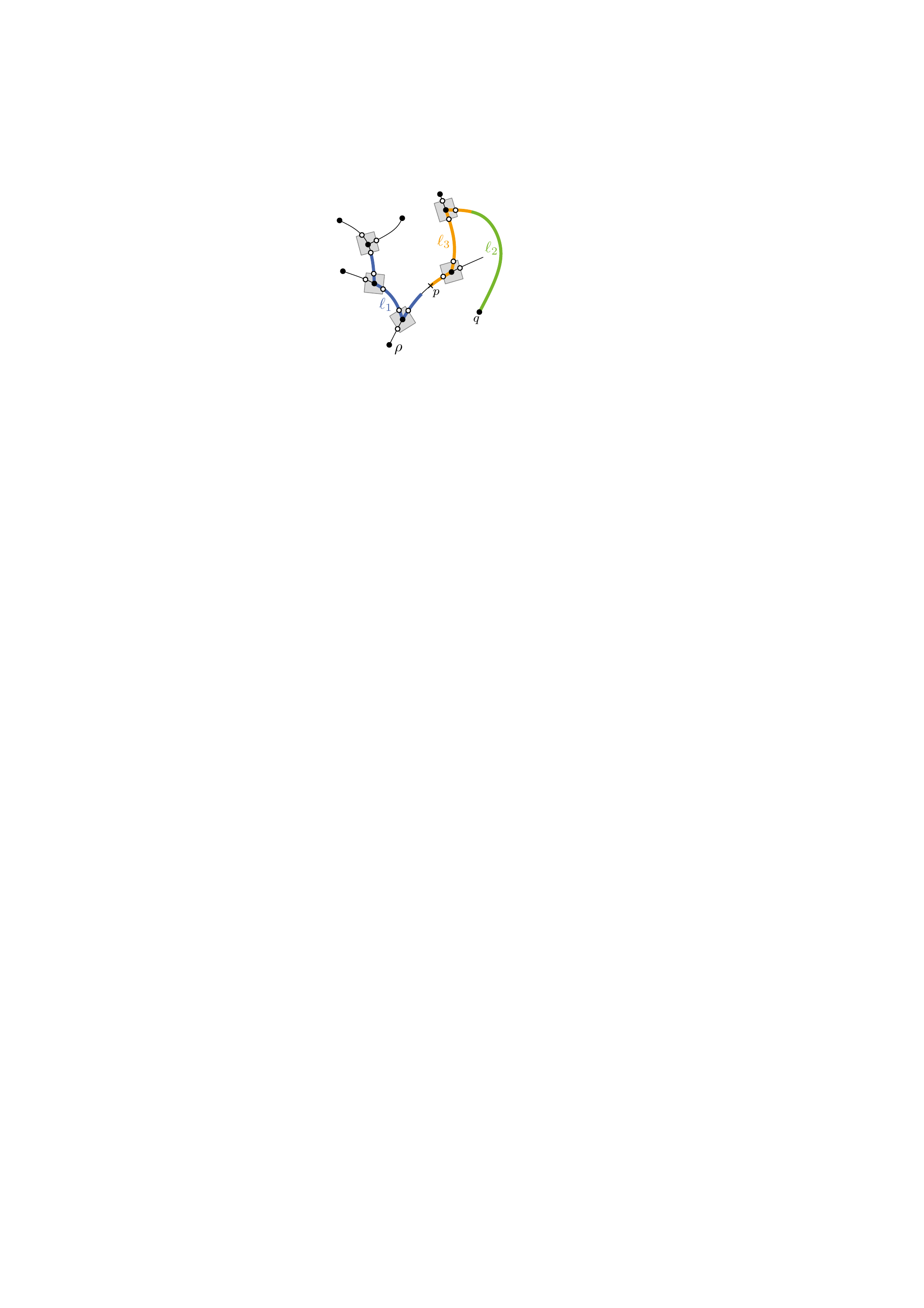}
  \caption{\small a--b): Two ways to label the same road network. Each
    road section has its own color. Junctions are marked gray. Fig.~b)
    identifies all road sections. c)~Illustration of the
    road graph and relevant terms.}
  \label{fig:goodbad}
\end{figure}

\section{Preliminaries}\label{sec:model}

As argued above, a road map is a collection of fat curves in the
plane, each representing a particular piece of a named road. If two
(or more) such curves intersect, they form junctions. A \emph{road
  label} is again a fat curve (the bounding shape of the road name)
that is contained in and parallel to the fat curve representing its
road. We observe that labels of different roads can intersect only
within junctions and that the actual width of the curves is
irrelevant, except for defining the shape and size of the junctions.
These observations allow us to define the following more abstract but
 equivalent road map model.

A \emph{road map}~$\mathcal M$ is a planar \emph{road graph} $G=(V,E)$ together
with a planar embedding $\E(G)$, which can be thought of as the
geometric representation of the road axes as thin curves; see Fig~\ref{fig:goodbad}c).  We denote
the number of vertices of $G$ by $n$, and the number of edges by
$m$. Observe that since $G$ is planar $m = O(n)$.  Each edge $e \in E$
is either a \emph{road section}, which is not part of a junction, or a
\emph{junction edge}, which is part of a junction.  Each vertex $v \in
V$ is either a \emph{junction vertex} incident only to junction edges,
or a \emph{regular vertex} incident to one road section and at most one
junction edge, which implies that each regular vertex has degree
at most two.  A junction vertex $v$ and its incident edges are
denoted as a \emph{junction}.  The edge set $E$ decomposes into a set
$\mathcal R$ of edge-disjoint \emph{roads}, where each road $R \in
\mathcal R$ induces a connected subgraph of $G$. Without loss of generality
we assume no two road sections~$G$ are incident to the same vertex.
Thus, a road decomposes into road sections, separated by junction
vertices and their incident junction edges. In realistic road networks the
number of roads connected passing through a junction is small and does not depend on
the size of the road network. We therefore assume that each vertex in~$G$
has constant degree. We assume that each road~$R\in \mathcal R$
has a name whose length we denote by~$\lw(R)$. 

For simplicity, we identify the embedding $\E(G)$ with the points in
the plane covered by $\E(G)$, i.e.~$\E(G)\subseteq \mathbb{R}^2$.
We also use $\E(v)$, $\E(e)$, and
$\E(R)$ to denote the embeddings of a vertex $v$, an edge $e$, and a
road $R$.

We model a label as a simple open curve $\ell\colon [0,1]\to \E(G)$ in~$\E(G)$.
Unless mentioned otherwise, we consider a curve~$\ell$ always to be
simple and open, i.e., $\ell$ has no self-intersections and its end
points do not coincide. In order to ease the description, we identify
a curve $\ell$ in $\E(G)$ with its image, i.e.,~$\ell$ denotes the
set~$\{\ell(t)\in \E(G)\mid t\in[0,1]\}$.  The start point of $\ell$
is denoted as the \emph{head} $h(\ell)$ and the endpoint as the
\emph{tail} $t(\ell)$. The length of~$\ell$ is denoted by
$\length(\ell)$.  The curve $\ell$ \emph{identifies} a road section
$r$ if $\ell \cap \E(r) \ne \emptyset$. For a set $\mathcal L$ of
curves $\weight(\mathcal L)$ is the number of road sections that are
identified by the curves in~$\mathcal L$. For a single curve~$\ell$ we
use~$\weight(\ell)$ instead of~$\weight(\{\ell\})$. For two
curves~$\ell_1$ and $\ell_2$ it is not necessarily true that
$\weight(\{\ell_1,\ell_2\})=\weight(\ell_1)+\weight(\ell_2)$, because
they may identify the same road section twice.

A \emph{label} $\ell$ for a road $R$ is a curve $\ell \subseteq
\E(R)$ of length $\lw(R)$ whose endpoints must lie on road sections
and not on junction edges or junction vertices. Requiring that labels
end on road sections avoids ambiguous  placement of labels in junctions
where it is unclear how the road passes through it. 
A \emph{labeling} $\mathcal L$ for a road map with road set $\mathcal
R$ is a set of mutually non-overlapping labels, where
we say that two labels $\ell$ and $\ell'$ \emph{overlap} if they
intersect in a point that is not their respective head or tail.

Following the cartographic quality criteria~(C1)--(C3), our goal is to find a
labeling $\mathcal L$ that maximizes the number of identified road sections, i.e., for any labeling $\mathcal L'$ we have
$\weight(\mathcal L')\leq \weight(\mathcal L)$.  We call this
problem~\MaxTotalCovering.

Note that assuming the road graph~$G$ to be planar is not a
restriction in practice. Consider for example a road section~$r$ that
overpasses another road section~$r'$, i.e., $r$ is a bridge over~$r'$,
or $r'$ is a tunnel underneath~$r$. In order to avoid overlaps between
labels placed on $r$ and $r'$, we either can model the intersection
of~$r$ and $r'$ as a regular crossing of two roads or we split~$r'$ in
smaller road sections that do not cross~$r$. In both cases the
corresponding road graph becomes planar. In the latter case we may
obtain more independent roads created by chopping~$r'$ into smaller
pieces.

\section{Computational Complexity}\label{sec:np-hardness}

We first study the computational complexity of road labeling and prove
\NP-hardness of \MaxTotalCovering in the following sense.

\begin{theorem}\label{thm:npc}
  For a given road map $\mathcal M$ and an integer $K$ it is \NP-hard to decide if in total at least $K$ road sections can be identified.
\end{theorem}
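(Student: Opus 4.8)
The plan is to reduce from a known NP-hard problem by encoding its structure into a road map so that the maximum number of identifiable road sections corresponds to a solution of the source problem.
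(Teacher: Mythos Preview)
Your proposal states only the generic strategy ``reduce from some NP-hard problem and encode it as a road map.'' That is indeed the paper's approach, but what you have written is a one-sentence plan, not a proof. All of the actual content is missing: you have not named a source problem, you have not described any gadgets, and you have not argued correctness of any encoding.

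Concretely, the paper reduces from \textsc{planar monotone 3-Sat} and builds four kinds of gadgets. A \emph{chain gadget} is a sequence of mutually crossing roads with exactly two optimal labelings, used to transmit a truth value. A \emph{fork gadget} splits one chain into two while preserving the value. These combine into \emph{variable gadgets} whose two optimal states encode \emph{true}/\emph{false}. A \emph{clause gadget} is a small road configuration that attains its maximum count of identified sections if and only if at least one incoming literal branch is in the ``true'' state. The threshold $K$ is chosen so that meeting it forces every variable gadget into one of its two optimal states and every clause gadget into its satisfied state. None of this is suggested by your proposal, and the nontrivial part of the argument is precisely designing these gadgets so that (i) labels interact only at junctions, (ii) each gadget has a sharp gap between its optimal and suboptimal counts, and (iii) the counts compose additively across gadgets to give a clean threshold~$K$. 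Without at least a sketch of such constructions and the accompanying counting argument, there is no proof here.
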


\begin{proof}
  We perform a reduction from the \NP-complete \textsc{planar monotone
    3-Sat} problem~\cite{l-pftu-82}. An instance of \textsc{planar
    monotone 3-Sat} is a Boolean formula $\varphi$ with $n$ variables
  and $m$ clauses (disjunctions of at most three literals) that
  satisfies the following additional requirements: (i) $\varphi$ is
  \emph{monotone}, i.e., every clause contains either only positive
  literals or only negative literals and (ii) the induced
  variable-clause graph $H_\varphi$ of $\varphi$ is planar and can be
  embedded in the plane with all variable vertices on a horizontal
  line, all positive clause vertices on one side of the line, all
  negative clauses on the other side of the line, and the edges drawn
  as rectilinear curves connecting clauses and contained variables on
  their respective side of the line.
 \begin{figure}[t]
   \begin{center}
     \includegraphics[page=5,scale=0.9]{./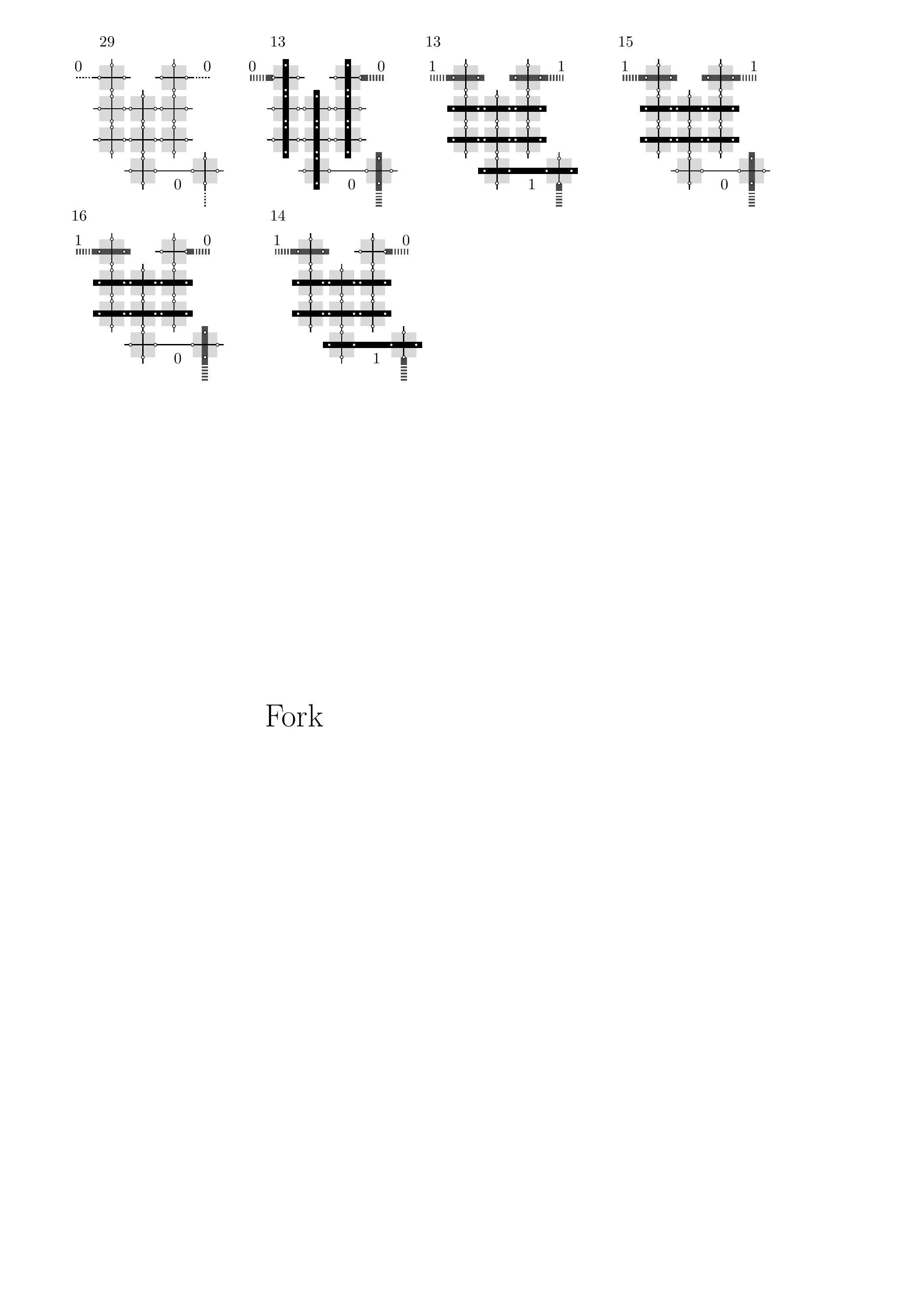}
   \end{center}
   \caption{\small Illustration of \NP-hardness proof. (a)~3-Sat
     formula $\varphi =( x_4 \vee x_1 \vee x_5) \wedge (x_2 \vee x_4
     \vee x_3) \wedge (\bar x_2 \vee \bar x_1 \vee \bar x_3) \wedge (
     \bar x_3 \vee \bar x_5 \vee \bar x_4)$ represented as road
     graph $\mathcal M_\varphi$. Truth assignment is $x_1=\mathit{true}$,
     $x_2=\mathit{true}$, $x_3=\mathit{false}$, $x_4=\mathit{false}$
     and $x_5=\mathit{false}$. (b)~Clause gadget in two states. (c)~The chain
     is the basic building block for the proof. (d)~Schematized fork gadget.}
   \label{fig:gadgets}
 \end{figure}
  We construct a road map $\mathcal M_\varphi$ that mimics the shape
  of the above embedding of $H_\varphi$ by defining variable and
  clause gadgets, which simulate the assignment of truth values to
  variables and the evaluation of the clauses. We refer to
  Fig.~\ref{fig:gadgets} for a sketch of the construction.
	
  \textit{Chain Gadget.} The basic building block is the
  \emph{chain gadget}, which consists of an alternating sequence of
  equally long horizontal and vertical roads with identical label
  lengths that intersect their respective neighbors in the sequence
  and form junctions with them as indicated in
  Fig.~\ref{fig:gadgets}c). Assume that the chain consists of $k\ge 3$
  roads. Then each road except the first and last one decomposes into
  three road sections split by two junctions, a longer central section
  and two short end sections; the first and last road consist of only
  two road sections, a short one and a long one, separated by one
  junction. (These two roads will later be connected to other gadgets;
  indicated by dotted squares in Fig.~\ref{fig:gadgets}c).) The label
  length and distance between junctions is chosen so that for each
  road either the central and one end section is identified, or no
  section at all is identified. For the first and last road, both
  sections are identified if the junction is covered and otherwise
  only the long section can be identified. We have $k$ roads and
  $k-1$ junctions. Each label must block a junction, if it identifies
  two sections. So the best possible configuration blocks all
  junctions and identifies $2(k-1) + 1 = 2k - 1$ road sections.
		
  The chain gadget has exactly two states, in which $2k-1$
  road sections are identified. Either the label of the first road
  does not block a junction and identifies a single section and all
  subsequent roads have their label cover the junction with the
  preceding road in the sequence, or the label of the last road does
  not block a junction and all other roads have their label cover the
  junction with the successive road in the sequence. In any other
  configuration there is at least one road without any identified
  section and thus at most $2k-2$ sections are identified. We use
  the two optimal states of the gadget to represent and transmit the
  values \emph{true} and \emph{false} from one end to the other.

  \textit{Fork Gadget.} The \emph{fork gadget} allows to split the
  value represented in one chain into two chains, which is needed to
  transmit the truth value of a variable into multiple clauses. To
  that end it connects to an end road of three chain gadgets by
  sharing junctions.

  \begin{figure}[t]
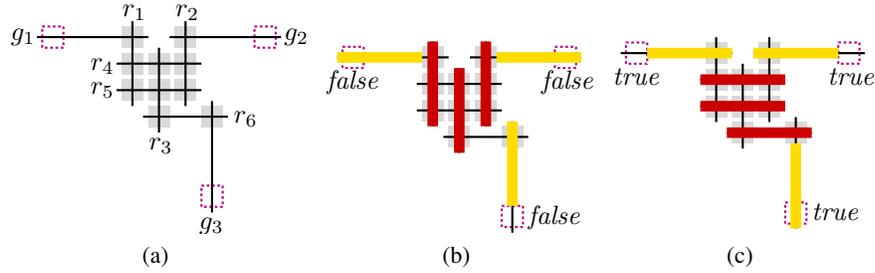

    \centering \subfigure[]{
      \includegraphics[page=11]{./fig/np-gadgets.pdf} \label{fig:fork:structure}}
    \subfigure[]{
      \includegraphics[page=12]{./fig/np-gadgets.pdf} \label{fig:fork:false}}
    \subfigure[]{
      \includegraphics[page=13]{./fig/np-gadgets.pdf} \label{fig:fork:true}}
    \caption{Illustration of the fork
      gadget. \protect\subref{fig:fork:structure} Structure of the
      fork gadget. \protect\subref{fig:fork:true} Configuration
      transmitting the value
      \emph{false}. \protect\subref{fig:fork:false} Configuration
      transmitting the value \emph{true}.}
  \label{fig:np:fork}
  \end{figure}

  The core of the fork consists of six roads~$r_1,\dots,r_6$, whereas
  $r_1$, $r_2$, and $r_3$ are vertical line segments and $r_4$, $r_5$
  and $r_6$ are horizontal line segments; see
  Fig.~\ref{fig:np:fork}. We arrange those roads such that $r_1$ and
  $r_2$ have each one junction with~$r_4$ and one junction
  with~$r_5$. Further, $r_3$ has one junction with $r_4$, one
  with~$r_5$ and one with $r_6$. The label length of those roads is
  chosen so that it is exactly the length of the roads. Hence, a
  placed label idenfies all road sections of the roads.

  Further, there are three roads $g_1$, $g_2$, $g_3$ such that $g_1$
  has one junction with~$r_1$, $g_2$ has one junction with $r_2$ and
  $g_3$ has one junction with $r_6$. In all three cases we place the
  junction so that it splits the road in a short road section that is
  shorter than the road's label length and a long road section that
  has exactly the road's label length. We call $g_1$, $g_2$ and $g_3$
  \emph{gates}, because later these roads will be connected to the end
  roads of chains by junctions. To that end those \emph{connecting}
  junctions will be placed on the long road sections of the gates; see
  violet dotted areas in Fig.~\ref{fig:np:fork}.

  The fork gadget has exactly two states, in which 16 road sections
  are identified.  In the first state the labels of $r_1$, $r_2$ and
  $r_3$ are placed; see Fig~\ref{fig:fork:false}.  Hence, the labels
  of $g_1$ and $g_2$ identify only the long road sections of $g_1$ and
  $g_2$, but not the short ones.  The label of $g_3$ idenfies both the
  long and short road section of~$g_3$.  In the second state the
  labels of $r_4$, $r_5$, $r_6$ are placed; see
  Fig~\ref{fig:fork:true}.  Hence, the labels of $g_1$ and $g_2$
  identify the long and short road sections of~$g_1$ and $g_2$, while
  only the long road section of~$g_3$ is identified by a label.  In
  any other configuration fewer road sections are identified by
  labels.  We use the two optimal states of the gadget to represent
  and transmit the values \emph{true} and \emph{false} from one gate
  to the other two gates. More specifically the gates $g_1$ and $g_2$
  are connected with chains that lead to the same literal, while $g_3$
  is connected with a chain that leads to the complementary literal.
	 
  \textit{Variable Gadget.} We define the \emph{variable gadgets}
  simply by connecting chain and fork gadgets into a connected
  component of intersecting roads. This construction already has the
  functionality of a variable gadget: it represents (in a labeling
  identifying the maximum number of road sections) the same truth
  value in all of its branches, synchronized by the fork gadgets, see
  the blue chains and yellow forks in Fig.~\ref{fig:gadgets}a). More
  precisely, we place a sequence of chains linked by fork gadgets
  along the horizontal line on which the variable vertices are placed
  in the drawing $H_\varphi$. Each fork creates a branch of the
  variable gadget either above or below the line. We create as many
  branches above (below) the line as the variable has occurrences in
  positive (negative) clauses in $\varphi$. The first and last chain
  on the line also serve as branches. The synchronization of the
  different branches via the forks is such that either all top
  branches have their road labels pushed away from the line and all
  bottom branches pulled towards the line or vice versa. In the first
  case, we say that the variable is in the state \emph{false} and in
  the latter case that it is in the state \emph{true}. The example in
  Fig.~\ref{fig:gadgets} has two variables set to \emph{true} and
  three variables set to \emph{false}.
	 
  \textit{Clause Gadget.} Finally, we need to create the clause
  gadget, which links three branches of different variables. The core
  of the gadget is a single road that consists of three sub-paths
  meeting in one junction. Each sub-path of that road shares another
  junction with one of the three incoming variable branches. Beyond
  each of these three junctions the final road sections are just long
  enough so that a label can be placed on the section. However, the
  section between the central junction of the clause road and the
  junctions with the literal roads is shorter than the label
  length. The road of the clause gadget has six sections in total and
  we argue that the six sections can only be identified if at least
  one incoming literal evaluates to \emph{true}. Otherwise at most
  five sections can be identified. By construction, each road in the
  chain of a false literal has its label pushed towards the clause,
  i.e., it blocks the junction with the clause road. As long as at
  least one of these three junctions is not blocked, all sections can
  be identified; see Fig.~\ref{fig:gadgets}b). But if all three
  junctions are blocked, then only two of the three inner sections of
  the clause road can be identified and the third one remains
  unlabeled; see Fig.~\ref{fig:gadgets}b).
	 
  \textit{Reduction.} Obviously, the size of the instance $\mathcal M_\varphi$ is
  polynomial in $n$ and $m$. If we have a satisfying variable
  assignment for $\varphi$, we can construct the corresponding road
  labeling and the number of identified road sections is six per
  clause and a fixed constant number $K'$ of sections in the variable
  gadgets, i.e., at least $K=K'+6m$. On the other hand, if we have a road
  labeling with at least $K$ identified sections, each variable
  gadget is in one of its two maximum configurations and each clause
  road has at least one label that covers a junction with a literal
  road, meaning that the corresponding truth value assignment of the
  variables is indeed a satisfying one. This concludes the
  reduction.
\end{proof}

Since~\MaxTotalCovering is an optimization problem, we only present
the~\NP-hardness proof. Still, one can argue that the
corresponding decision problem is \NP-complete by guessing which junctions are covered by which label and then using linear programming for computing the label positions. We omit the technical details. Further, most roads in the reduction are paths, except for the central road in each clause gadget, which is a degree-3 star. In fact, we can strengthen Theorem~\ref{thm:npc} by using a more complex clause gadget instead that uses only paths; see Appendix~\ref{apx:alt-clause}.

\section{An Efficient Algorithm for Tree-Shaped Road Maps}
\label{sec:tree-algorithm}
In this section we assume that the underlying road graph of the road
map is a tree $T=(V,E)$. In Section~\ref{sec:tree:basic-approach} we present a polynomial-time algorithm to  optimally  solve
\begin{wrapfigure}[16]{l}{4.5cm}
  \centering
  \includegraphics[trim=0pt 0pt 0pt .6cm]{./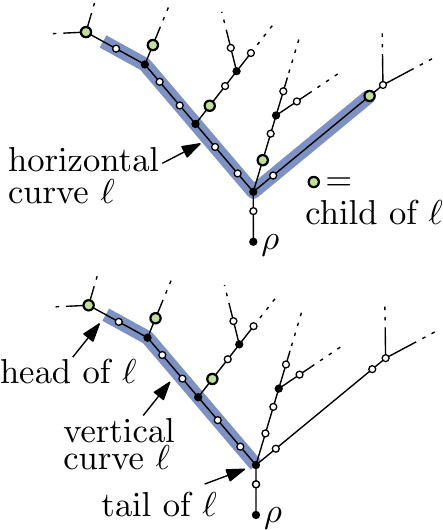}
  \caption{\small Basic definitions.}
  \label{fig:tree:basic-definitions}
\end{wrapfigure}
\MaxTotalCovering for trees; Section~\ref{sec:trees-faster} shows how to improve its running time and space consumption. 
 Our approach uses the 
basic idea that  removing the vertices, whose
embeddings lie in a curve
$c\subseteq \E(T)$, splits the tree into independent parts.
 In 
particular this is true for labels.
We assume that~$T$ is rooted at an arbitrary leaf~$\rho$ and that 
its
edges are
directed away from~$\rho$;~see
Fig.~\ref{fig:tree:basic-definitions}. For two points $p, q \in \E(T)$
we define $\dist(p, q)$ as the length of the shortest curve in~$\E(T)$
that connects~$p$ and~$q$. For two vertices $u$ and $v$ of~$T$ we also
write $\dist(u,v)$ instead of $\dist(\E(u),\E(v))$. For a point~$p\in
E(T)$ we abbreviate the distance~$\dist(p,\rho)$ to the root~$\rho$
by~$\dist_p$. For a curve~$\ell$ in $\E(T)$, we call~$p \in \ell$ the
\emph{lowest point} of~$\ell$ if $\dist_p\leq \dist_q,$ for any~$q\in
\ell$.  As~$T$ is a tree,~$p$ is unique. We distinguish two types of
curves in~$\E(T)$. A curve~$\ell$ is~\emph{vertical} if~$h(\ell)$
or~$t(\ell)$ is the lowest point of~$\ell$; otherwise we call $\ell$
\emph{horizontal} (see
\figurename~\ref{fig:tree:basic-definitions}). Without loss of
generality we assume that the lowest point of each vertical curve
$\ell$ is its tail $t(\ell)$. Since labels are modeled as curves, they
are also either vertical or horizontal.
For a vertex $u\in V$ let $T_u$ denote the subtree rooted at $u$.

\subsection{Basic Approach}\label{sec:tree:basic-approach}

We first determine a finite set of candidate positions for the heads
and tails of labels, and transform $T$ into a tree $T' = (V', E')$ by
subdividing some of $T$'s edges so that it
contains a vertex for every
candidate position. 
To that end we construct for each regular vertex $v \in V$ a chain of
tightly packed vertical labels that starts at $\E(v)$, is directed
towards $\rho$, and ends when either the road ends, or adding the next
label does not increase the number of identified road sections. More
specifically, we place a first vertical label~$\ell_1$ such that
$h(\ell_1) = \E(v)$.  For $i=2, 3, \dots$ we add a new vertical label
$\ell_i$ with $h(\ell_i)$ = $t(\ell_{i-1})$, as long as $h(\ell_i)$
and $t(\ell_i)$ do not lie on the same road section and none of
$\ell_i$'s endpoints lie on a junction edge.
 We use the tails of all
those labels to subdivide the tree $T$. Doing this for all regular
vertices of~$T$ we obtain the tree~$T'$, which we call the
\emph{subdivision tree} of~$T$. The vertices in $V'\setminus V$ are
neither junction vertices nor regular vertices.  Since each chain
consists of $O(n)$ labels the cardinality of~$V'$ is~$O(n^2)$.  We
call an optimal labeling $\OPT$ of $T$ an \emph{canonical labeling} if
for each label $\ell \in \mathcal L'$ there exists a vertex $v$ in~
$T'$ with $\E(v) = h(\ell)$ or $\E(v) = t(\ell)$.  The next lemma
proves that is sufficient to consider canonical labelings.

\newcommand{\thmCanonicalLabeling}{
 For any road graph $T$ that is a tree, there exists a canonical labeling~$\mathcal L$. 
}
\begin{lemma}\label{lem:tree:canonical-labeling}
\thmCanonicalLabeling
\end{lemma}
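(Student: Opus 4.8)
The plan is to show that any optimal labeling can be transformed into a canonical one without decreasing the number of identified road sections. I would start from an arbitrary optimal labeling $\mathcal L$ and argue that each label can be ``snapped'' to a candidate position encoded in the subdivision tree $T'$, processing labels one at a time in a carefully chosen order so that moving one label never forces a previously fixed label out of position.

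\textbf{Setup and ordering.} First I would fix an optimal labeling $\mathcal L$ and sort its labels by the distance of their lowest point to the root $\rho$, processing them from the root outward (smallest $\dist_p$ first). Recall from the construction that for every regular vertex $v$ we built a chain of tightly packed vertical labels $\ell_1,\ell_2,\dots$ directed towards $\rho$, and used their tails to subdivide $T$ into $T'$. The key structural fact I would lean on is that the head of such a chain sits exactly at $\E(v)$, and each subsequent label abuts the previous one; hence the candidate tail positions in $T'$ form, along each root-directed path, a discretization that is fine enough to ``catch'' any label once it is slid towards the root.

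\textbf{The snapping argument.} For each label $\ell\in\mathcal L$ for a road $R$, I would slide $\ell$ towards the root $\rho$ along $\E(R)$ as far as possible, i.e., until either its lowest point reaches a junction/road-section boundary that it cannot cross, or it becomes blocked by an already-processed label nearer the root. I claim the resulting position has an endpoint at a vertex of $T'$: because $\ell$ has the exact length $\lw(R)$ and the chains in the construction were tightly packed with that same road's label length, a maximally root-ward label must have an endpoint coinciding with one of the chain tails used to subdivide the edge (or with an original vertex of $T$, which is also in $V'$). Crucially, sliding towards the root can only move $\ell$'s lowest point closer to $\rho$, so it never collides with labels I have already fixed farther from the root, and since it still covers at least the road sections it covered before (it continues to intersect $\E(R)$ and cannot lose a junction it previously blocked without gaining the adjacent section), the value $\weight(\mathcal L)$ does not decrease. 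Horizontal labels require a separate but analogous check, using that their lowest point is an interior point and that both ``arms'' can be aligned to chain tails emanating from the two regular endpoints.

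\textbf{Main obstacle.} The delicate step will be verifying that the snapped position genuinely exists as a vertex of $T'$ rather than merely lying ``between'' two consecutive candidate tails, and that snapping preserves the number of identified road sections exactly (not just up to the ones that were already covered). The worry is a label whose optimal placement covers a junction to identify two sections at once; I must confirm that the tight packing of the chains guarantees a candidate position that still straddles that junction. I expect to resolve this by observing that the chain was terminated precisely when an additional label stops increasing coverage, so the discretization is at least as fine as any meaningful label displacement, and that moving a blocked label to abut its root-ward neighbor places its endpoint exactly at that neighbor's endpoint, which is a vertex of $T'$ by construction. Once both the existence and the coverage-preservation claims are established for vertical and horizontal labels, $\mathcal L$ (after all snaps) is an optimal labeling in which every label has an endpoint at a vertex of $T'$, which is exactly the definition of a canonical labeling.
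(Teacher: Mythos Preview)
Your sliding direction is backwards relative to the construction of~$T'$, and this breaks the argument. The subdivision chains are built by placing a first label with its \emph{head} at a regular vertex $v$ and then packing further labels \emph{towards the root}; the resulting subdivision points therefore sit at positions $v - \lw(R),\, v - 2\lw(R),\dots$ (measured towards~$\rho$) from each regular vertex~$v$. A label that is pushed \emph{away} from the root until its head hits a regular vertex~$v$ (or the tail of an already-pushed label) lands exactly on one of these positions. A label pushed \emph{towards} the root, however, has its tail at some regular vertex~$u$ (or at the head of an earlier label), and its head then sits at $u + \lw(R)$; nothing in the construction of~$T'$ guarantees a subdivision vertex there. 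Concretely, take a road that passes through two junctions with three road sections $e_1,e_2,e_3$ of lengths $0.7\lw(R),\,0.7\lw(R),\,5\lw(R)$ (and short junction edges). Two tightly packed labels starting at the root identify all three sections, but after your root-ward push the second label has its tail at the first label's head, which is an interior point of $e_2$ that is not a vertex of~$T'$, and its head is an interior point of $e_3$ that is likewise not a vertex of~$T'$. So the pushed labeling is not canonical.

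The paper's proof does the mirror image of what you propose: it processes labels starting from those closest to the leaves and slides each one \emph{away} from the root, additionally insisting that head and tail remain on their original road sections (which cleanly guarantees that the set of identified sections is unchanged, a point your coverage argument leaves vague). After this push every label either has an endpoint at a regular vertex of~$T$ or abuts the tail of a label farther out, so each vertical label belongs to one of the chains used to build~$T'$, and each horizontal label has one endpoint at the end of such a chain or at a regular vertex. Reversing your direction (and tightening the coverage-preservation step to keep endpoints on their original sections) recovers exactly this argument.
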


\begin{proof}
  Let~$\mathcal L$ be an optimal labeling of~$T$. We \emph{push} the
  labels of $\mathcal L$ as far as possible towards the leaves of~$T$
  without changing the identified road sections; see
  Fig.~\ref{fig:tree:packed-labels}. More specifically, starting with
  the labels closest to the leaves, we move each label away from the
  root as far as possible while its head and tail must remain on their
  respective road sections. For a vertical label this direction is
  unique, while for horizontal labels we can choose any of the
  two. Then, for each label its head or tail either coincides with a
  leaf of $T$, with some internal regular vertex, or with the head of
  another label. Consequently, each vertical label belongs to a chain
  of tightly packed vertical labels starting at a regular vertex~$v\in
  V$. Further, the head or tail of each horizontal label coincides
  with the end of a chain of tightly packed vertical labels or a
  regular vertex of~$T$, which proves the claim.  \qed
\end{proof}

We now explain how to construct such a canonical labeling. To that end
we first introduce some notations. For a vertex~$u\in V'$ let~$\OPT(u)$ denote a labeling that identifies
\begin{wrapfigure}[19]{l}{2.8cm}
  %\vspace{-0.7cm}
  \includegraphics[page=2, trim=0pt 0pt 0pt 0.6cm]{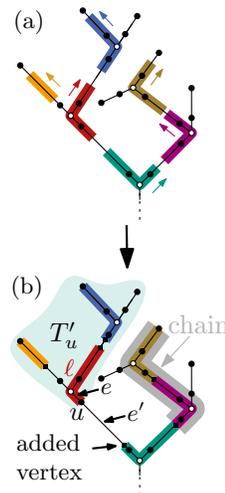}
  \caption{\small Canonical labeling.}
  %\vspace{-0.8cm}
  \label{fig:tree:packed-labels}
\end{wrapfigure}
 a maximum number of road sections
in $T$ only using valid labels in $\E(T'_u)$, where $T'_u$ denotes the subtree of $T'$ rooted at~$u$.
 Note that those labels also may identify the 
incoming road section of $u$, e.g., label~$\ell$ in
Fig.~\ref{fig:tree:packed-labels}b) identifies the edge~$e'$.

Further, the children of a vertex $u\in V'$ are denoted by the
set~$N(u)$; we explicitly exclude the parent of~$u$ from
$N(u)$. Further, consider an arbitrary curve~$\ell$ in $\E(T)$ and
let~$\ell'=\ell\setminus\{t(\ell),h(\ell)\}$. We observe that removing
all vertices of $T'$ contained in $\ell'$ together with their incident
outgoing edges creates several independent subtrees. We call the roots
of these subtrees (except the one containing $\rho$) \emph{children}
of $\ell$ (see Fig.~\ref{fig:tree:basic-definitions}). 
If no vertex of~$T'$ lies in~$\ell'$, the curve is contained in a single edge~$(u,v)\in E'$. In that case~$v$ is the only child
of~$\ell$. We denote the set of all children of $\ell$ as $N(\ell)$. 

For each vertex $u$ in $T'$ we introduce a set $C(u)$ of
\emph{candidates}, which model potential labels with lowest
point~$\E(u)$. If~$u$ is a regular vertex of~$T$ or~$u\in V'\setminus
V$, the set~$C(u)$ contains all vertical labels~$\ell$ with lowest
point~$\E(u)$. If~$u$ is a junction vertex, $C(u)$ contains all
horizontal labels that start or end at a vertex of $T'$ and whose
lowest point is~$\E(u)$. In both cases we assume that~$C(u)$ also
contains the degenerated curve~$\bot_u=\E(u)$, which is the
\emph{dummy label} of~$u$. We set $N(\bot_u)=N(u)$
 and~$\weight(\bot_u)=0$.

For a curve~$\ell$ we define $ \OPT(\ell) = \bigcup_{v\in
  N(\ell)}\OPT(v) \cup \{\ell\}$. Thus, $\OPT(\ell)$ is a labeling comprising $\ell$
and the labels of its children's optimal labelings. We call a label
$\overline \ell\in C(u)$ with $\overline
\ell=\argmax\{\weight(\OPT(\ell)) \mid \ell \in C(u)\}$ an
\emph{optimal candidate} of~$u$. Next, we prove that it is sufficient
to consider optimal candidates to construct a canonical labeling.

\newcommand{\lemOptCandidate}{
  Given a vertex $u$ of $T'$ and an optimal
  labeling~$\OPT(u)$ and let~$\overline \ell$ be an optimal
  candidate of $u$, then it is true that $ \weight(\OPT(u)) =
  \weight(\OPT(\overline \ell)) $.
}
\begin{lemma}\label{lem:tree:basic-approach}
  \lemOptCandidate
\end{lemma}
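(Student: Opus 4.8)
The plan is to prove the claimed equality through the two inequalities $\weight(\OPT(\overline\ell)) \le \weight(\OPT(u))$ and $\weight(\OPT(u)) \le \weight(\OPT(\overline\ell))$, treating $C(u)$ and the children sets $N(\cdot)$ as the branching structure of a tree dynamic program. The one structural fact I would use repeatedly is the splitting property stated earlier: deleting the interior $\ell \setminus \{h(\ell), t(\ell)\}$ of any curve $\ell$ from $\E(T')$ disconnects $T'_u$ into independent subtrees whose roots are exactly $N(\ell)$.

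For the inequality $\weight(\OPT(\overline\ell)) \le \weight(\OPT(u))$ I would simply verify that $\OPT(\overline\ell)$ is a feasible labeling inside $\E(T'_u)$. Indeed $\overline\ell \in C(u)$ has lowest point $\E(u)$ and hence lies in $\E(T'_u)$, while every $\OPT(v)$ with $v \in N(\overline\ell)$ uses only labels in $\E(T'_v) \subseteq \E(T'_u)$. By the splitting property the subtrees $\E(T'_v)$ are pairwise disjoint and separated from the interior of $\overline\ell$, so no two labels of $\OPT(\overline\ell)$ overlap. Maximality of $\OPT(u)$ then yields the inequality.

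For the reverse inequality I would start from an optimal labeling realizing $\OPT(u)$ and, invoking Lemma~\ref{lem:tree:canonical-labeling}, assume it is canonical, so that every head and tail lies on a vertex of $T'$. Since $\E(u)$ is the lowest point of all of $T'_u$, any label meeting $\E(u)$ must have $\E(u)$ as its lowest point, and the rule that endpoints avoid junction vertices forces such a label to be exactly of the type collected in $C(u)$; moreover there can be at most one of them. Call it $\ell^*$, or set $\ell^* = \bot_u$ if none exists, so that in either case $\ell^* \in C(u)$. Deleting the interior of $\ell^*$ partitions the remaining labels into groups $\mathcal L_v$, $v \in N(\ell^*)$, each feasible inside $\E(T'_v)$ and hence with $\weight(\mathcal L_v) \le \weight(\OPT(v))$. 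Recombining and using $\weight(\OPT(\ell^*)) \le \weight(\OPT(\overline\ell))$ (the definition of the optimal candidate) should give $\weight(\OPT(u)) \le \weight(\OPT(\overline\ell))$.

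The hard part is the recombination step, because $\weight$ is not additive: distinct labels may identify a common road section, and in particular the incoming road section of every child $v$ is already identified by $\ell^*$, whose interior meets the edge entering $v$. Thus the only contribution a child can make beyond $\ell^*$ consists of road sections lying strictly inside $T'_v$, and these are pairwise disjoint across children. The delicate claim is therefore $\weight(\{\ell^*\} \cup \bigcup_{v} \mathcal L_v) \le \weight(\{\ell^*\} \cup \bigcup_{v} \OPT(v))$, i.e.\ that swapping each $\mathcal L_v$ for the subtree-optimum $\OPT(v)$ never loses a strictly interior section. I expect this to be the main obstacle: a maximum labeling of $T'_v$ maximizes the \emph{total} count, which includes the incoming section that $\ell^*$ already covers, so I must argue that one can realize $\OPT(v)$ without sacrificing any strictly interior section merely to cover that shared incoming section — a book-keeping fact that I would establish from the packed structure of the subdivision tree $T'$ together with a short exchange argument.
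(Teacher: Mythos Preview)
Your plan mirrors the paper's proof almost step for step: the easy inequality by feasibility of $\OPT(\overline\ell)$ inside $\E(T'_u)$, then invoke Lemma~\ref{lem:tree:canonical-labeling} to make $\OPT(u)$ canonical, extract the (unique) label $\ell$ whose lowest point is $\E(u)$ or take $\bot_u$ if none exists, observe $\ell\in C(u)$, and use independence of the subtrees rooted at $N(\ell)$ to compare with $\OPT(\ell)$ and hence with $\OPT(\overline\ell)$.

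Where you diverge is only in caution. The paper does \emph{not} discuss the non-additivity of $\weight$ that you flag as the ``hard part.'' It simply writes that the vertices in $N(\ell)$ being roots of independent subtrees ``directly yields $\weight(\OPT(u))=\weight(\OPT(\ell))$,'' and in the $\bot_u$ case gives a one-line chain of equalities. No exchange argument is offered.

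Two remarks on the overlap you worry about. First, it is narrower than you state: if $v\in N(\ell^*)$ hangs off a junction vertex in the interior of $\ell^*$, then the edge into $v$ is a junction edge, so no label in $\E(T'_v)$ can share a road section with $\ell^*$. The only possible double-counting is at the head of $\ell^*$ (and also the tail if $\ell^*$ is horizontal)---at most one or two children, one road section each. Second, your instinct that this step deserves justification is sound: if the fixed optimum $\OPT(h)$ at the head happens to ``spend'' coverage on the incoming section $r_h$ that $\ell^*$ already covers, while the restriction $\mathcal L_h$ of $\OPT(u)$ to $T'_h$ does not, then the swap $\mathcal L_h\mapsto\OPT(h)$ can indeed lose a strictly interior section. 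The paper's terse ``directly yields'' does not exclude this; you are being more scrupulous than the authors here, not less.
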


\begin{proof}
  First note that $\weight(\OPT(u))\geq \weight(\OPT(\overline \ell))$
  because both labelings $\OPT(u)$ and $\OPT(\overline{\ell})$ only
  contain labels that are embedded in $\E(T'_u)$.
  By Lemma~\ref{lem:tree:canonical-labeling} we can assume without
  loss of generality that~$\OPT(u)$ is a canonical labeling.
  Let~$\ell$ be the label of $\OPT(u)$ with $\E(u)$ as the
  lowest point of~$\ell$ (if it exists). 

  If $\ell$ exists, then the vertices in~$N(\ell)$ are roots of
  independent subtrees, which directly yields
  $\weight(\OPT(u))=\weight(\OPT(\ell))$.  By construction of $C(u)$
  we further know that~$\ell$ is contained in~$C(u)$.  Hence, $\ell$
  is an optimal candidate of~$u$, which
  implies~$\weight(\ell)=\weight(\overline \ell)$.
  
  If ~$\ell$ does not exist, then we have
  \[\weight(\OPT(u))=
  \weight(\bigcup_{v\in N(u)}\OPT(v)) \stackrel{(1)}{=}
  \weight(\bigcup_{v\in N(\bot_u)}\OPT(v) \cup \{\bot_u\})
  = \weight(\OPT(\bot_u)).\] Equality $(1)$ follows
  from $N(\bot_u)=N(u)$ and the definition that
  $\bot_u$ does not identify any road section. Since~$\bot_u$ is
  contained in $C(u)$, the dummy label $\bot_u$ is the optimal
  candidate~$\overline \ell$.  \qed
\end{proof}

Algorithm~\ref{algo:basic-approach} first constructs the subdivision tree $T'=(V',E')$ from $T$. Then starting with the leaves of~$T'$ and going to the root~$\rho$ of~$T'$, it computes an optimal candidate~$\overline \ell=$\OPTCAND$(u)$ for each vertex~$u\in V'$ in a bottom-up fashion. By Lemma~\ref{lem:tree:basic-approach} the labeling~$\mathcal L(\overline \ell)$ is an optimal labeling of~$T'_u$. In particular~$\mathcal L(\rho)$ is the optimal labeling of~$T$. 
\begin{algorithm}[b]
  \KwIn{Road graph~$T$, where $T$ is a tree with root $\rho$.} \KwOut{Optimal labeling~$\OPT(\rho)$ of~$T$.}
  $T' \gets $ compute subdivision tree of~$T$\; 
  \lFor{each leaf~$v$ of $T'$}{$\OPT(v)\gets \emptyset$}
  \For{each vertex~$u$ of $T'$ considered in a bottom-up traversal of~$T'$}{
     % Let~$u$ be such a vertex.\;
    $\OPT(u) \gets \OPT(\OPTCAND(u))$\;\label{line:opt-candidate} } \Return
  $\OPT(\rho)$
  \caption{Computing an optimal labeling
    of~$T$. }
  \label{algo:basic-approach}
\end{algorithm}

Due to the size of the subdivision tree $T'$ we consider $O(n^2)$ vertices.  Implementing
$\OPTCAND(u)$, which computes an optimal 
candidate~$\overline \ell$ for $u$, naively, creates~$C(u)$ explicitly. We observe that
if $u$ is a junction vertex, $C(u)$ may contain $O(n^2)$
labels; $O(n^2)$ pairs of road sections of different subtrees of $u$
can be connected by horizontal labels. Each label can be constructed
in~$O(n)$ time using a breadth-first search. Thus, for each vertex~$u$
the procedure \texttt{OptCandidate} needs in a naive implementation
$O(n^3)$ time, which yields
$O(n^5)$ running time in total.
Further, we need $O(n^2)$ storage to
store $T'$. Note that we do not need to store $\OPT(u)$ for each
vertex~$u$ of $T'$, but by Lemma~\ref{lem:tree:basic-approach} we can
reconstruct it using $\OPT(\overline \ell)$, where~$\overline \ell$ is
the optimal candidate of~$u$. To that end we store for each vertex of
$T'$ its optimal candidate~$\overline \ell$ and $w(\OPT(\overline
\ell))$.

\begin{theorem}
  For a road map with a tree as underlying road graph,
  \MaxTotalCovering can be solved in~$O(n^5)$ time using $O(n^2)$
  space.
\end{theorem}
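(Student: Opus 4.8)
The plan is to argue correctness of Algorithm~\ref{algo:basic-approach} and then establish the claimed time and space bounds; both amount to an analysis of the algorithm combined with the two preceding lemmas, so I would not redo any optimality argument from scratch.

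For correctness I would proceed by induction along the bottom-up traversal of $T'$, maintaining the invariant that after a vertex $u$ has been processed the labeling $\OPT(u)=\OPT(\OPTCAND(u))$ is optimal among all labelings using only valid labels in $\E(T'_u)$. The base case are the leaves, where $\OPT(v)=\emptyset$ is trivially optimal. For the inductive step, the key observation is that every candidate $\ell\in C(u)$ has lowest point $\E(u)$, so each child $v\in N(\ell)$ is a proper descendant of $u$ in $T'_u$ and has therefore already been handled; hence $\OPT(v)$ is available and, by the induction hypothesis, optimal for $T'_v$. This makes $\weight(\OPT(\ell))=\weight(\{\ell\}\cup\bigcup_{v\in N(\ell)}\OPT(v))$ well defined for every candidate, and Lemma~\ref{lem:tree:basic-approach} then gives that the optimal candidate $\overline\ell$ returned by \OPTCAND satisfies $\weight(\OPT(u))=\weight(\OPT(\overline\ell))$. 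Applying the invariant at the root, for which $T'_\rho=T'$, shows that $\OPT(\rho)$ is an optimal labeling of $T$, as claimed.

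For the running time I would first bound $|V'|=O(n^2)$: the subdivision tree is built by growing, from each of the $O(n)$ regular vertices, a chain of $O(n)$ tightly packed vertical labels, which costs $O(n^2)$ time in total. Next I would bound the cost of a single \OPTCAND$(u)$ call by $O(n^3)$. For a regular or subdivision vertex there are only $O(n)$ vertical candidates, whereas for a junction vertex $C(u)$ may contain $O(n^2)$ horizontal candidates, one for each pair of road sections lying in different subtrees of $u$. For each candidate $\ell$ I would run a breadth-first search in $\E(T'_u)$ in $O(n)$ time that simultaneously determines the children $N(\ell)$, adds up their stored weights, and counts the road sections identified by $\ell$ itself. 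Summing $O(n^3)$ over the $O(n^2)$ vertices of $T'$ yields the overall $O(n^5)$ bound.

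For the space bound the decisive point is that we never materialise the labelings $\OPT(u)$. We store $T'$ in $O(n^2)$ space, and for each of its $O(n^2)$ vertices we keep only the optimal candidate $\overline\ell$ (described by its two endpoints in $T'$) together with the value $\weight(\OPT(\overline\ell))$, which is $O(n^2)$ in total; a concrete labeling can then be reconstructed on demand by following these candidate references downward from $\rho$, as justified by Lemma~\ref{lem:tree:basic-approach}. The step I expect to require the most care is the $O(n)$-time weight evaluation per candidate: one must verify that a single traversal can both locate the children of the curve and tally the identified sections without double-counting sections that a child's optimal labeling already covers, since in general $\weight$ is not additive over overlapping curves.
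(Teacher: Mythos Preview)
Your proposal is correct and follows essentially the same approach as the paper: correctness by bottom-up induction via Lemma~\ref{lem:tree:basic-approach}, the $O(n^2)$ bound on $|V'|$ from $O(n)$ chains of $O(n)$ packed labels, the $O(n^3)$ bound on a naive \OPTCAND call from $O(n^2)$ horizontal candidates at junction vertices times $O(n)$ per candidate, and the $O(n^2)$ space bound by storing only $T'$ together with each vertex's optimal candidate and its weight. The paper's argument is terser but structurally identical; your explicit induction and your caveat about non-additivity of $\weight$ when tallying $\weight(\OPT(\ell))$ are reasonable elaborations that the paper leaves implicit.
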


In case that all roads are paths, Algorithm~\ref{algo:basic-approach}
runs in~$O(n^4)$ time, because for each $u\in V'$ the set~$C(u)$
contains~$O(n)$ labels. Further, besides the
\emph{primary objective} to identify a maximum number of road sections, Chiri{\'e}~\cite{street-name-placement} also suggested
several additional \emph{secondary objectives}, e.g., labels should
overlap as few junctions as possible.
Our approach allows us to easily incorporate secondary
objectives by changing the weight function $\weight$ appropriately.

\subsection{Improvements on Running Time}\label{sec:trees-faster}

In this part we describe how the running time of
Algorithm~\ref{algo:basic-approach} can be improved to $O(n^3)$ time
by speeding up \texttt{OptCandidate}$(u)$ to~$O(n)$ time.

For an edge~$e=(u,v)\in E\cup E'$ we call a vertical curve
$\ell\subseteq \E(T)$ an \emph{$e$-rooted} curve, if~$t(\ell)=\E(u)$, $h(\ell)$ lies on a road section, and $\length(\E(e)\cap\ell)=\min\{\length(\ell),\length(\E(e))\}$, i.e., $\ell$ emanates 
from~$\E(u)$ passing through~$e$; for example the red label in
Fig.~\ref{fig:tree:packed-labels}b) is an $e$-rooted curve.  An
$e$-rooted curve~$\ell$ is \emph{maximal} if there is no other
$e$-rooted curve~$\ell'$ with $\length(\ell)=\length(\ell')$ and
$\weight(\OPT(\ell'))>\weight(\OPT(\ell))$.  We observe that in any
canonical labeling each vertical label~$\ell$ is a~$(u,v)$-rooted
curve with~$(u,v)\in E'$, and each horizontal label~$\ell$ can be
composed of a $(u,v_1)$-rooted curve~$\ell_1$ and a $(u,v_2)$-rooted
curve~$\ell_2$ with~$(u,v_1),(u,v_2)\in E'$ and~$\E(u)$ is the lowest
point of~$\ell$; see Fig.~\ref{fig:tree:regular-vertex} and
Fig.~\ref{fig:tree:junction-vertex}, respectively.
Further, for a vertical curve~$c$ in $\E(T)$ its
\emph{distance interval $I(c)$} is
$[\dist_{t(c)},\dist_{h(c)}]$.
Since~$T$ is a tree, for every
point $p$ of $c$ we have $\dist_p\in I(c)$.
\begin{wrapfigure}[10]{r}{6cm}
  \centering
  \includegraphics[page=3,trim=0pt 0pt 0pt 0.9cm]{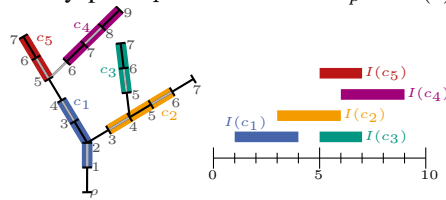}
  \caption{\small Superposing curves, e.g., $c_1$ and
    $c_2$ superpose each other, while $c_1$ and $c_5$ do not.
	The tree is annotated with distance marks.}
  \label{fig:tree:superposition}
\end{wrapfigure}
 Two
vertical curves $c$ and $c'$ 
\emph{superpose} each other if~$I(c)\cap
I(c')\neq \emptyset$; see Fig~\ref{fig:tree:superposition}.

Next, we introduce a data structure that encodes for each
edge~$(u,v)$ of~$T$ all maximal $(u,v)$-rooted curves as~$O(n)$
superposition-free curves in $\E(T_u)$. In particular, each of those curves lies on a
single road section such that all $(u,v)$-rooted curves ending on that
curve are maximal and identify the same number of road sections. We
define this data structure as follows.

\begin{definition}[Linearization]
  Let $e=(u,v)$ be an edge of~$T$.  A tuple
  $(L,\weighting)$ is called a \emph{linearization} of~$e$, if $L$ is
  a set of superposition-free curves and $\weighting\colon
  L\to\mathbb{R}$ such that
  \begin{compactenum}[(1)]
  \item for each curve $c\in L$ there is a road section~$e'$ in $T_u$
    with $c\subseteq \E(e')$\label{lin:cond1},
  \item for each~$e$-rooted curve~$\ell$ there is a curve~$c\in L$
    with $\length(\ell)+\dist_u \in I(c)$\label{lin:cond2},
  \item for each point $p$ of each curve~$c\in L$ there is a maximal
    $e$-rooted curve~$\ell$ with $h(\ell)=p$ and
    $\weighting(c)=\weight(\OPT(\ell))$.\label{lin:cond3}
  \end{compactenum}
\end{definition}

Assume that we apply Algorithm~\ref{algo:basic-approach} on $T'$ and that
we currently consider the vertex $u$ of $T'$. Hence, we can assume
that for each vertex $v\neq u$ of $T'_u$ its optimal candidate and
$\weight(\OPT(v))$ is given.  We first explain how to speed up~\texttt{OptCandidate}
using linearizations. Afterwards, we present the construction of
linearizations.

\subsubsection{Application of linearizations.}  Here we assume
that the linearizations are given for the
edges of~$T$. Concerning the type of $u$ we
describe how to compute its optimal candidate.

\textit{Case 1, $u$ is regular.} If~$u$ is a leaf, the set $C(u)$
contains only~$\bot_u$.  Hence, assume that~$u$ has one outgoing
edge~$e=(u,v)\in E'$, which belongs to a road~$R$. Let $P$ be the
longest path of vertices in~$T'_u$ that starts at~$u$ and does not
contain any junction vertex.  Note that the path must be
unique. Further, by construction of~$T'$ the last vertex~$w$ of $P$
must be a regular vertex in~$V$, but not in $V'\setminus V$. We
consider two cases; see Fig~\ref{fig:tree:regular-vertex}.
\begin{wrapfigure}[12]{l}{3.0cm}
  \centering
  \includegraphics[page=4,trim=0 0 0 0.4cm]{fig/tree1.pdf}
  \caption{\small Case 1}
  \label{fig:tree:regular-vertex}
\end{wrapfigure}
\indent If $\dist(u,w)\geq \lw(R)$, the optimal candidate is either $\bot_u$ or
the $e$-rooted curve~$\ell$ of length~$\lw(R)$ that ends on
$\E(P)$. By assumption and due to
$\weight(\OPT(\bot_u))=\weight(\OPT(v))$, we decide in $O(1)$ time
whether $\weight(\OPT(\bot_u))\geq \weight(\OPT(\ell))$, obtaining the optimal candidate.

If $\dist(u,w) < \lw(R)$, the optimal candidate is either~$\bot_u$ or
goes through a junction. Since~$w$ is regular, it has only one
outgoing edge~$e'=(w,x)$.  Further, by the choice of~$P$ the edge~$e'$
is a junction edge in $T$; therefore the linearization
$(L,\weighting)$ of $e'$ is given. 
 In linear time we search for the
curve $c \in L$ such that there is an $e$-rooted curve~$\ell$ of
length~$\lw(R)$ with its head on $c$. To that end we consider for each
curve~$c \in L$ its distance interval $I(c)$ and check whether there
is $t\in I(c)$ with $t-\dist_u=\lw(R)$. Note that using a binary search tree 
for finding~$c$ speeds this procedure up to $O(\log n)$ time,
however, this does not asymptotically improve the total running time.
  The $e$-rooted curve~$\ell$ then can be easily constructed in
$O(n)$ time by walking from~$c$ to~$u$ in~$\E(T)$.  

If such a
curve~$c$ exist, by definition of a linearization the optimal
candidate is either $\bot_u$ or $\ell$, which we can decide in $O(1)$
time by checking $\weight(\OPT(\bot_u))\geq \weight(\OPT(\ell))$. 
\begin{wrapfigure}[8]{l}{3.0cm}
  \centering
  \includegraphics[page=5,trim=0 0 0 0.6cm]{fig/tree1.pdf}
  \caption{\small Case 2}
  \label{fig:tree:junction-vertex}
\end{wrapfigure}
Note
that we have $\weight(\OPT(\bot_u))=\weight(\OPT(v))$ and
$\weight(\OPT(\ell))=\overline \weight(c)$.  
If $c$ does not exist,
again by definition of a linearization there is no vertical
label~$\ell \in C(u)$ and $\bot_u$ is the optimal candidate.

\textit{Case 2, $u$ is a junction vertex.} The set $C(u)$ contains
horizontal labels.
Let~$\ell$ be such a label and let $e_1=(u,v_1)$ and $e_2=(u,v_2)$ be
two junction edges in $E$ covered by~$\ell$; see Fig.~\ref{fig:tree:junction-vertex}. 
Then
there is an $e_1$-rooted curve $\ell_1$ and an $e_2$-rooted curve
$\ell_2$ whose
composition is~$\ell$.
 Further,
we have $\weight(\OPT(\ell))=\weight(\OPT(\ell_1)\cup
\OPT(\ell_2))+\sum_{v\in N(u)\setminus\{v_1,v_2\}}\weight(\OPT(v))
$. We use this as follows.

Let $e_1$ and $e_2$ be two outgoing edges of~$u$ that belong to the
same road~$R$, and let $(L_1,\weighting_1)$ and $(L_2,\weighting_2)$
be the linearizations of $e_1$ and $e_2$, respectively.
We define for
$e_1$ and $e_2$ and their linearizations the operation $\ml(L_1,L_2)$
that finds an optimal candidate of $u$ restricted to labels
identifying $e_1$ and $e_2$.

For~$i=1,2$ let~ $d_i=\max\{\dist_u\mid u
\text{ is vertex of }T_{v_i}\}$ and
let~$f_u(t)=\dist_u-(t-\dist_u)=2\dist_u-t$ be the function that
``mirrors'' the point~$t\in \mathbb{R}^2$
at~$\dist_u$. 
\begin{wrapfigure}[17]{l}{6.2cm}
  \centering
  \includegraphics[page=6,trim=0 0 0 0.5cm]{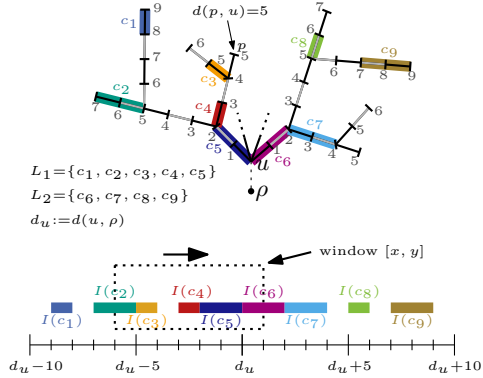}
  \caption{\small Constructing the optimal candidate of  $u$
    based on the linearizations $(L_1,\weighting_1)$
    and~$(L_2,\weighting_2)$. The tree is annotated with distance
    marks.}
  \label{fig:tree:merge-linearizations}
\end{wrapfigure}
Applying~$f_u(t)$ on the boundaries of the distance
intervals of the curves in $L_1$, we first mirror these intervals such
that they are contained in the interval~$[2\dist_u-d_1,\dist_u]$; see
Fig.~\ref{fig:tree:merge-linearizations}. Thus, the curves in~$L_1 \cup L_{2}$
are mutually superposition-free such that their distance
intervals lie in~$J=[2\dist_u-d_{1},d_{2 }]$.

\noindent We call an interval $[x,y] \subseteq J$ a \emph{window}, if
it has length $\lw(R)$, $\dist_u\in [x,y]$ and there are curves
$c_1\in L_1$ and $c_2\in L_2$ with $x\in I(c_1)$ and $y\in I(c_2)$;
see Fig.~\ref{fig:tree:merge-linearizations}. By the definition of a
linearization there is a maximal $e_1$-rooted curve $\ell_1$ ending on
$c_1$ and a maximal $e_2$-rooted curve $\ell_2$ ending on $c_2$ such
that $\length(\ell_1)+\length(\ell_2)=\lw(R)$.  Consequently, the
composition of $\ell_1$ and $\ell_2$ forms a horizontal label~$\ell$
with $\weight(\OPT(\ell))=\weight(\OPT(\ell_1)\cup
\OPT(\ell_2))+\sum_{v\in N(u)\setminus\{v_1,v_2\}}\OPT(v)$; we call
$\weight(\OPT(\ell))$ the \emph{value} of the window.  Using a simple
sweep from left to right we compute for the distance interval~$I(c)$
of each curve $c \in L_1\cup L_2$ a window~$[x,y]$ that starts or ends
in $I(c)$ (if such a window exists). The result of $\ml(L_1,L_2)$ is
then the label $\ell$ of the window with maximum value.
For each pair $e_1$ and $e_2$ of outgoing edges we apply
$\ml(L_1,L_2)$ computing a label $\ell$. By construction either the
label~$\ell$ with maximum $\weight(\ell)$ or $\bot_u$ is the optimal
candidate for~$u$, which we can check in $O(1)$ time.
Later on we prove that we  consider only linearizations of linear size. Since each vertex 
of~$T'$ has constant degree, we obtain the next lemma.

\begin{lemma}\label{lem:tree:apply-lin}
  For each $u\in V'$ the optimal candidate can be found in
  $O(n)$ time.
\end{lemma}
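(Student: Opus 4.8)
The plan is to establish the time bound through the case analysis already set up above, verifying that each of the two cases for $u$ runs in $O(n)$ time. Throughout I would rely on two facts supplied elsewhere: that every linearization has size $O(n)$ (proved below) and that every vertex of $T'$ has constant degree, together with the assumption that the optimal candidate and value $\weight(\OPT(v))$ of every $v\neq u$ in $T'_u$ are already known. Since the optimal candidate is $\overline\ell=\argmax\{\weight(\OPT(\ell))\mid \ell\in C(u)\}$ and $C(u)$ always contains the dummy label $\bot_u$ of known value $\weight(\OPT(\bot_u))=\weight(\OPT(v))$, it suffices in each case to produce the single best non-dummy candidate together with its value $\weight(\OPT(\cdot))$, and then to take the maximum against $\bot_u$ in $O(1)$ time.

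First I would treat Case~1, where $u$ is regular and $C(u)$ consists of vertical labels. The leaf subcase is immediate. Otherwise I walk once from $u$ along the unique junction-free path $P$ to its last vertex $w$, which costs $O(n)$ and decides which subcase applies. If $\dist(u,w)\ge\lw(R)$, the only non-dummy candidate is the $e$-rooted curve $\ell$ of length $\lw(R)$ ending on $\E(P)$; since $P$ carries no branching, I construct $\ell$ and read its value $\weight(\OPT(\ell))$ while walking along $\E(P)$, in $O(n)$. If $\dist(u,w)<\lw(R)$, the candidate must pass through the junction edge $e'=(w,x)$, so I scan the linearization $(L,\weighting)$ of $e'$ and test for each curve $c\in L$ whether some $t\in I(c)$ satisfies $t-\dist_u=\lw(R)$; by superposition-freeness this picks out at most one curve $c$, the scan is $O(n)$, and if $c$ exists I read off $\weight(\OPT(\ell))=\weighting(c)$ and rebuild $\ell$ by a walk from $c$ to $u$ in $O(n)$. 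In every subcase the best non-dummy candidate is found with its value in $O(n)$, so the comparison with $\bot_u$ finishes Case~1 in $O(n)$.

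Next I would handle Case~2, where $u$ is a junction vertex and $C(u)$ consists of horizontal labels. For each pair $e_1=(u,v_1)$, $e_2=(u,v_2)$ of outgoing edges belonging to the same road $R$, I run $\ml(L_1,L_2)$: I first mirror the distance intervals of $L_1$ with $f_u$, touching $O(n)$ intervals, which yields a single family $L_1\cup L_2$ of mutually superposition-free curves with intervals in $J$. I then sweep $J$ from left to right, and for the interval $I(c)$ of each curve I test whether a window $[x,y]$ of length $\lw(R)$ with $\dist_u\in[x,y]$ starts or ends in $I(c)$, evaluate its value in $O(1)$, and keep the best. Because $|L_1\cup L_2|=O(n)$ and each curve is processed once, $\ml(L_1,L_2)$ runs in $O(n)$; since $u$ has constant degree only $O(1)$ pairs arise, so the overall best horizontal label is found in $O(n)$ and comparing it with $\bot_u$ yields the optimal candidate.

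The step I expect to be the main obstacle is the window sweep inside $\ml(L_1,L_2)$, where I must argue both correctness and the $O(n)$ time. For correctness I would use the defining properties of a linearization to show that the maximal horizontal labels with lowest point $\E(u)$ correspond exactly to windows, so that restricting attention to windows loses no optimality. For the running time I would exploit that $L_1\cup L_2$ is superposition-free, so each distance value lies in at most one interval on each side; this lets the sweep advance monotonically and examine only $O(n)$ candidate windows. The key bookkeeping is evaluating a window's value $\weight(\OPT(\ell))=\weight(\OPT(\ell_1)\cup\OPT(\ell_2))+\sum_{v\in N(u)\setminus\{v_1,v_2\}}\weight(\OPT(v))$ in constant time, which I would do by reading $\weight(\OPT(\ell_i))$ directly from $\weighting_i(c_i)$ and using the constant size of $N(u)$ to form the remaining sum.
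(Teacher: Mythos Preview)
Your proposal is correct and follows essentially the same argument as the paper. The paper's justification for this lemma is precisely the two-case analysis you reproduce (regular vs.\ junction vertex), together with the forward reference that linearizations have size $O(n)$ and the constant-degree assumption; you have filled in the details of each step (the walk along $P$, the scan of the linearization, the mirror-and-sweep in $\ml$, and the $O(1)$ evaluation of window values) in exactly the way the paper intends.
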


\subsubsection{Construction of linearizations.}

\begin{figure}
  \centering
  \begin{minipage}[t]{0.35\textwidth}
    \centering
    \includegraphics[page=7]{fig/tree1.pdf}
    \caption{\small 1st Step: For each edge~$e_i$ extend its linearization~$(L,\weighting)$ to a 
    linearization~$(L_i,\weighting_i)$ of~$T_i$.}
    \label{fig:tree:construction-step1}
  \end{minipage}
  \hfill
  \begin{minipage}[t]{0.6\textwidth}
    \centering
    \includegraphics[page=9]{fig/tree1.pdf}
    \caption{\small 2nd Step: Merging the linearizations of the trees $T_i$ and $T_j$.}
    \label{fig:tree:construction-step2}
  \end{minipage}
\end{figure}

We now show how to recursively construct a linearization for an edge
$e=(u,v)$ of~$T$.  To that end we assume that we are given the
subdivision tree $T'$ of $T$ and the linearizations for the outgoing
edges~$e_1=(v,w_1),\dots,e_k=(v,w_k)$ of~$v$ that belong to the same
road~$R$ as~$e$. Further, we can assume that we have computed the
weight $\weight(\OPT(w))$ for all vertices~$w$ in $T'_u$ excluding
$u$. In case that two vertices of those vertices share the same
position in $\E(T'_u)$ we remove that one with less weight.  Let $T_i$
be the tree induced by the edges $e$, $e_i$ and the edges of the
subtree rooted at~$w_i$.  As a first step we compute for each
linearization $(L,\weighting)$ of each edge~$e_i$ a linearization
$(L_i,\weighting_i)$ for $e$ restricted to tree~$T_i$, i.e.,
conceptually, we assume that $T_u$ only consists of $T_i$'s edges. 

\noindent If $e$ is a junction edge we set $L_i\gets L$ and weight each curve~$c\in
L_i$ as follows.
\[
\weighting_i(c)\gets \weighting(c)+\sum\limits_{\mathclap{w\in
    N(v)\setminus\{w_i\}}}\weight(\OPT(w))
\]

Otherwise, if $e$ is a road section, let $v_1,\dots,v_l$ be the
vertices of the subdivision tree~$T'$ that lie on~$e$, i.e.,
$\E(v_j)\in \E(e)$ for all $1\leq j\leq l$; see
Fig.~\ref{fig:tree:construction-step1}. We assume that $\dist(v_1)<
\ldots< \dist(v_l)$, which in particular yields $v_1=u$ and
$v_l=v$. Let $c_1$ be the curve~$\E((v_1,v_2))$ and for $2\leq j<l$
let~$c_j$ be the curve $\E((v_j,v_{j+1}))\setminus \E(v_j)$.
Hence, we have $\bigcup_{j=1}^l c_j=\E(e)$ and $c_j\cap
c_{j'}=\emptyset$ for $1\leq j < j' < l$.  We set
\[
 L_i\gets L\cup \bigcup_{j=1}^{l-1}\{c_j\}
\]
We weight each curve $c\in L_i$ as follows.
If $c$ is contained in $L$, we set 
\[
\weighting_i(c)\gets \weighting(c)+1
\]
Otherwise, $c$ is a sub-curve of $\E(e)$ and there exists a $j$ with $c=c_j$. We set 
\[
  \weighting_i(c)\gets\weight(\mathcal L(v_{j+1})\cup \{\ell_c\}), 
\]
where $\ell_c\subseteq \E(e)$ is an $e$-rooted curve that starts at $\E(u)$ and
ends on $c$. The next lemma
shows that this transformation yields a linearization as desired.
\begin{lemma}\label{lem:tree:construct1}
  For each outgoing edge $e_i$ with linearization $(L,\weighting)$ the
  tuple $(L_i,\weighting_i)$ is a linearization of $e$ restricted to
  the tree $T_i$.
\end{lemma}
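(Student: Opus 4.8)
The plan is to verify that $(L_i,\weighting_i)$ satisfies conditions~(1)--(3) of a linearization, together with superposition-freeness, for the edge $e$ but with $T_u$ replaced by $T_i$. The guiding principle is the subtree-independence already exploited in Lemma~\ref{lem:tree:basic-approach}: deleting the interior of a vertical curve $\ell$ disconnects $T'$ into independent subtrees, so $\weight(\OPT(\ell))$ decomposes additively over $\{\ell\}$ and the optimal labelings of the children $N(\ell)$. I would treat the two cases of the construction separately, in each case relating an $e$-rooted curve $\ell$ in $T_i$ to the $e_i$-rooted curve $\ell_i$ obtained by cutting $\ell$ at $\E(v)$, which has the same head and satisfies $\length(\ell)+\dist_u=\dist_{h(\ell)}=\length(\ell_i)+\dist_v$.

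For the case that $e$ is a junction edge we have $L_i=L$, so condition~(1) and superposition-freeness are inherited verbatim. Since inside $T_i$ a label may not end on the junction edge $e$ or on $v$, every $e$-rooted curve leaves through the unique outgoing edge $e_i$, so its part below $\E(v)$ is an $e_i$-rooted curve with the same head; the distance identity above then lets the curve $c\in L$ guaranteed by condition~(2) for $(L,\weighting)$ witness condition~(2) for $(L_i,\weighting_i)$. The heart of this case is condition~(3): deleting the interior of $\ell$ removes the junction vertex $v$, so the children of $\ell$ in $T_i$ are exactly $N(\ell_i)$ together with the off-path children $N(v)\setminus\{w_i\}$; as $e$ is a junction edge it contributes no road section, and independence gives
\[
\weight(\OPT(\ell)) \;=\; \weight(\OPT(\ell_i)) + \sum_{w\in N(v)\setminus\{w_i\}}\weight(\OPT(w)) \;=\; \weighting_i(c).
\]
Maximality of $\ell$ follows from that of $\ell_i$ because the length correspondence is a constant shift and the added sum is head-independent.

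For the case that $e$ is a road section, $L_i=L\cup\{c_1,\dots,c_{l-1}\}$. Superposition-freeness needs the new curves: the $c_j$ have pairwise disjoint (half-open) distance intervals tiling $[\dist_u,\dist_v]$ -- this is exactly why $\E(v_j)$ is removed in the definition of $c_j$ -- and they lie strictly below the intervals of $L$, since the junction edge $e_i$ separates $\dist_v$ from the road sections of $T_{w_i}$ carrying the curves of $L$. Condition~(1) holds as $c_j\subseteq\E(e)$. For condition~(2), an $e$-rooted curve either has its head on $e$, where some $c_j$ covers $\dist_{h(\ell)}\in[\dist_u,\dist_v]$, or beyond $\E(v)$, where the associated $e_i$-rooted curve supplies a curve of $L$ as before. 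Condition~(3) splits accordingly. If $p$ lies on $c\in L$, then $\ell$ sweeps all of $e$ and identifies the road section $e$ once, while $\OPT(\ell_i)$ lives strictly below $\E(v)$ and cannot re-identify $e$; hence $\weight(\OPT(\ell))=\weight(\OPT(\ell_i))+1=\weighting(c)+1=\weighting_i(c)$. If $p$ lies on $c_j$, then deleting the interior of the $e$-rooted curve $\ell_c$ ending at $p$ leaves $v_{j+1}$ as its unique child for every $p\in c_j$ -- precisely why $e$ is partitioned at the subdivision vertices -- so $\OPT(\ell_c)=\OPT(v_{j+1})\cup\{\ell_c\}$ and $\weight(\OPT(\ell_c))=\weight(\OPT(v_{j+1})\cup\{\ell_c\})=\weighting_i(c_j)$; since distance along the single road section $e$ determines the head uniquely, $\ell_c$ is trivially maximal.

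The main obstacle is the weight bookkeeping in condition~(3): the road section $e$ must be counted exactly once. This is delicate where $\ell_c$ reaches into $e$ while $\OPT(v_{j+1})$ may also cover $e$ from below; writing the value as $\weight(\OPT(v_{j+1})\cup\{\ell_c\})$ rather than $\weight(\OPT(v_{j+1}))+1$ performs the de-duplication, and I must argue that this union-weight is independent of the choice of $p\in c_j$ (so that $\weighting_i$ is well defined), which holds because $\ell_c\subseteq\E(e)$ always identifies exactly the same road section and $\OPT(v_{j+1})$ is fixed on $c_j$. The remaining conditions are then routine consequences of subtree independence and the length/distance correspondence between $e$- and $e_i$-rooted curves.
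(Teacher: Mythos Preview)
Your proposal is correct and follows essentially the same approach as the paper's proof: verify superposition-freeness and conditions~(1)--(3) by relating each $e$-rooted curve to its $e_i$-rooted sub-curve via the constant distance shift, splitting into the junction-edge and road-section cases. The paper treats the road-section case first and is terser about the de-duplication you highlight, but the structure and key identities are the same; one small wording slip is that the ``off-path children'' $N(v)\setminus\{w_i\}$ are children of $\ell$ in the full tree $T'$, not in $T_i$, though your displayed formula already reflects this correctly.
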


\begin{proof}
  We use the same notation as used above.

  First of all, the set $L_i$ contains only curves that do not
  superpose each other: Since $L_i$ contains only curves that do not
  superpose each other, the only curves that could superpose another
  curve in $L$ are contained in $L_i\setminus L$. Since $L_i\setminus
  L$ is empty for a junction edge, we can assume that~$e$ is a road
  section. By construction those curves in $L_i\setminus L$
  partition~$\E(e)$ without intersecting each other. Further,
  by assumption no two road sections share a common vertex and since
  all curves of~$L$ are contained in $\E(T_v)$, the curves in
  $L_i\setminus L$ cannot superpose any curve in~$L$.

  We now prove that $L_i$ satisfies the three conditions of a
  linearization.  First assume that $e$ is a road section.

  \emph{Condition~(\ref{lin:cond1})}. Since $L$ is a linearization,
  each curve of $L$ must be a sub-curve of a road section. Further,
  the curves $L_i\setminus L$ are sub-curves of the road section~$e$.

  \emph{Condition~(\ref{lin:cond2})}. First consider an $e$-rooted
  curve~$\ell$ that either ends on $e_i$ or on an edge
  of~$T_{w_i}$. Recall that $h(\ell)$ must lie on a road section. Then
  there is an~$e_i$-rooted curve~$\ell'$ with~$\ell' \subseteq \ell$
  and~$h(\ell)=h(\ell')$. Hence, there is a curve~$c\in L$ with
  $\length(\ell')+\dist_v\in I(c)$. Since~$\ell'$ is a sub-curve
  of~$\ell$, we also have $\length(\ell)+\dist_u\in I(c)$. Now,
  consider an $e$-rooted curve~$\ell$ that ends on $e$, then~obviously
  by construction there is a curve~$c\in L_i\setminus L$ with
  $\length(\ell)+\dist_u\in I(c)$.

  \emph{Condition~(\ref{lin:cond3})}. First consider an arbitrary
  curve~$c\in L_i\setminus L$ and let~$\ell$ be any $e$-rooted curve 
  that ends on $c$. Further, let $v_1,\dots,v_l$ be the
  vertices of the subdivision tree~$T'$ that lie on~$e$ as defined
  above. By construction there is an edge~$(v_j,v_{j+1})$ with $1\leq
  j < l$ and $c\subseteq \E(v_j,v_{j+1})$. It holds
  \[
  \weight(\OPT(\ell))=\weight(\OPT(v_{j+1})\cup\{\ell\})=\weighting_i(c)
  \]
  Obviously, $\ell$ must be maximal, because there is no other point
  in $\E(T_i)$ having the same distance to~$\rho$ as $h(\ell)$ has.

  Finally, consider a curve~$c\in L$ and let~$\ell$ be any
  $e$-rooted curve that ends on~$c$. As~$L$ is a linearization
  of~$e_i$, for each point~$p$ on $c$ there must be an $e_i$-rooted
  curve~$\ell'$ with~$h(\ell')\in c$. We choose~$\ell'$ such
  that~$h(\ell')=h(\ell)$.  Since~$\ell'$ is a maximal $e_i$-rooted
  curve, the curve $\ell$ must be a maximal $e$-rooted curve. Further,
  $\ell$ identifies one road section more than~$\ell'$. Hence, we
  obtain
  \[\weight(\OPT(\ell))=\weight(\OPT(\ell'))+1= \weighting(c)+1
=\weighting_i(c)\]

Now consider the case that~$e$ is a junction
edge. \emph{Condition~(\ref{lin:cond1})} and
\emph{Condition~(\ref{lin:cond2})} follow by the same arguments as
stated above with the simplification that $L_i=L$.
  
\emph{Condition~(\ref{lin:cond3})}. Let~$c$ be a curve in $L_i$ and
let $\ell$ be any $e$-rooted curve that ends on~$c$. Further,
let~$\ell'$ be the $e_i$-rooted sub-curve of~$\ell$ that starts at
$\E(v)$ and ends at $h(\ell)$; by definition of $L$ such a curve
exists. It holds
 \[
    \weight(\OPT(\ell))=\weight(\OPT(\ell'))+\sum\limits_{\mathclap{w\in
    N(v)\setminus\{w_i\}}}\weight(\OPT(w))=\weighting(c)+\sum\limits_{\mathclap{w\in
    N(v)\setminus\{w_i\}}}\weight(\OPT(w))=\weighting_i(c)
 \]
 Since~$\ell'$ is a maximal~$e_i$-rooted curve, it directly follows that
 $\ell$ is a maximal $e$-rooted curve with respect to~$T_i$.
\qed
\end{proof}

In the next step we define an operation $\oplus$ by means of which two
linearizations $(L_i,\weighting_i)$ and $(L_j,\weighting_j)$ can be
combined to one linearization $(L_i,\weighting_i) \oplus
(L_j,\weighting_j)$ of $e$ that is restricted to the subtree~$T_{i,j}$
induced by the edges of $T_i$ and $T_j$. Consequently,
$\bigoplus_{i=1}^k (L_i,\weighting_i)$ is the linearization of $e$
without any restrictions.

\begin{figure}[h]
  \centering
  \includegraphics[page=8]{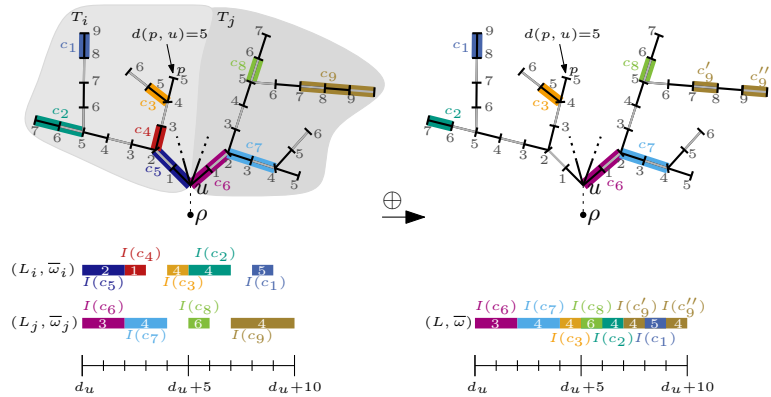}
  \caption{Illustration of merging two linearizations~$(L_i,\weighting_i)$ and
   $(L_j,\weighting_j)$ into one linearization $(L_1,\weighting_i)$. The trees are
   annotated with distance marks.}
  \label{fig:tree:merge}
\end{figure}

We define $(L,\weighting)=(L_i,\weighting_i) \oplus
(L_j,\weighting_j)$ as follows; for illustration see also
Fig.~\ref{fig:tree:merge}.  Let $c_1,\dots,c_\ell$ be
the curves of $L_{i}\cup L_{j}$ such that for any two curves $c_s$,
$c_t$ with $s<t$ the left endpoint of~$I(c_s)$ lies to the left of
the left endpoint of $I(c_t)$; ties are broken arbitrarily. We
successively add the curves to~$L$ in the given order enforcing that
the curves in $L$ remain superposition-free. Let $c$ be the next curve
to be added to $L$.

Without loss of generality, let~$c\in L_{i}$. The opposite case can be
handled analogously. In case that there is no curve superposing $c$,
we add $c$ to $L$ and set $\weighting(c)=\weighting_{i}(c)$.  If $c$
superposes a curve in $L$, due the order of insertion, there can only
be one curve $c'$ in $L$ that superposes $c$.
First we remove $c'$ from $L$. Let $I_M$ be the interval describing
the set $I(c)\cap I(c')$, and let $I_L$ and $I_R$ be the intervals
describing the set~$I(c)\cup I(c') \setminus (I(c) \cap I(c'))$ such
that $I_L$ lies to the left of~$I_M$ and~$I_R$ lies to the right of
$I_M$; see Fig.~\ref{fig:tree:construction-step2}.

We now define three curves $c_L$, $c_M$ and $c_{R}$ with $I(c_L)=I_L$,
$I(c_M)=I_M$ and $I(c_R)=I_R$ such that each of these three curves is
a sub-curve of either~$c$ or~$c'$. To that end let $c[I]$ denote the
sub-curve of $c$ whose distance interval is $I$.  We define the curve
$c_R$ with weight $\weighting(c_R)$ as
\[
(c_R,\weighting(c_R)) =
\begin{cases}
  (c[I_R],\weighting_i(c)), & \text{if }I_R\subseteq I(c)\\
  (c'[I_R],\weighting(c')), & \text{if } I_R\subseteq I(c')\\
\end{cases}
\]
The curve $c_L$ and its weight $\weighting(c_L)$ is defined
analogously. The curve~$c_M$ and its weight $\weighting(c_M)$ is
\[
(c_M,\weighting(c_M)) =
\begin{cases}
  (c[I_M],\weighting_i(c)), & \text{if }\weighting_{i}(c)\geq \weighting(c')\\
  (c'[I_M],\weighting(c')), & \text{if }\weighting_{i}(c) < \weighting(c')\\
\end{cases}
\]
The next lemma proves that $(L_i,\weighting_i) \oplus
(L_j,\weighting_j)$ is a restricted linearization.

\begin{lemma}\label{lem:tree:construct2}
  Let $(L_i,\weighting_i)$ and $(L_j,\weighting_j)$ be two
  linearizations of $e=(u,v)$ that are restricted to the trees $T_i$
  and $T_j$, respectively.
  Then $(L,\weighting)=(L_i,\weighting_i) \oplus (L_j,\weighting_j)$
  is a linearization of $e$ restricted to~$T_{i,j}$.  The operation
  needs $O(|L_i|+|L_j|)$ time.
\end{lemma}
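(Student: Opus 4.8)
The plan is to read a linearization $(L,\weighting)$ of $e=(u,v)$ as a piecewise-constant function of the distance value $t$ that assigns to every attainable head-distance $t$ the quantity $\weight(\OPT(\ell))$ of a \emph{maximal} $e$-rooted curve $\ell$ whose head sits at distance $t$ from $\rho$; conditions (2) and (3) say exactly that this function is defined on the set of attainable head-distances and that its value is the best $\weight(\OPT(\ell))$ achievable there. Under this reading, $\oplus$ is simply the pointwise maximum of the two functions attached to $(L_i,\weighting_i)$ and $(L_j,\weighting_j)$. I would therefore prove the lemma by verifying, in turn, that $L$ is superposition-free, that it meets conditions (1)--(3) for $e$ restricted to $T_{i,j}$, and finally the running-time bound.

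Superposition-freeness and conditions (1), (2) are the routine part. Superposition-freeness is an invariant of the insertion loop: since the curves are processed by increasing left endpoint of their distance interval and $L$ stays superposition-free, a newly inserted curve $c$ can overlap at most one curve $c'$ already in $L$ (two disjoint predecessors could both meet $c$ only in the degenerate endpoint-touching case), and the replacement pieces $c_L,c_M,c_R$ have pairwise disjoint intervals partitioning $I(c)\cup I(c')$, so they neither overlap one another nor any remaining curve. Condition (1) is inherited, because every piece is a sub-curve of an original curve of $L_i$ or $L_j$ and hence, by the linearization property of the inputs, lies on a road section of $T_u$. For condition (2) I would first note that each split preserves the union of distance intervals, so $\bigcup_{c\in L}I(c)=\bigcup_{c\in L_i\cup L_j}I(c)$; since $T_i$ and $T_j$ branch apart below $v$, every $e$-rooted curve of $T_{i,j}$ enters exactly one of the two subtrees, so its head-distance is already covered by $L_i$ or $L_j$, hence by $L$.

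The crux is condition (3), which is where the weight rule for $c_M$ earns its keep. Fix a point $p$ of a curve $\hat c\in L$ at distance $t=\dist_p$. Because every split is aligned with interval endpoints of the inputs, both input functions are constant across $I(\hat c)$, so $\weighting(\hat c)$ is a genuine constant $\max\{M_i(t),M_j(t)\}$, where $M_s(t)$ is the value condition (3) assigns to the curve of $L_s$ through distance $t$ (only the defined term is taken when one input is absent there). Let $s\in\{i,j\}$ attain this maximum. Condition (3) for $(L_s,\weighting_s)$ supplies a maximal $e$-rooted curve $\ell$ in $T_s$ with $h(\ell)=p$ and $\weight(\OPT(\ell))=M_s(t)=\weighting(\hat c)$. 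It remains to see that $\ell$ is maximal in the larger tree $T_{i,j}$: any competitor $\ell'$ of the same length has its head at distance $t$ and lies in $T_i$ or $T_j$, so $\weight(\OPT(\ell'))\le\max\{M_i(t),M_j(t)\}=\weight(\OPT(\ell))$ by the maximality of the input curves; thus $\ell$ is maximal in $T_{i,j}$. This is precisely the step that would fail had we kept the smaller weight on $I_M$: the reported curve would then be dominated by one from the other subtree and would not be maximal.

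Finally, the running time. Since $L_i$ and $L_j$ are kept sorted by left interval endpoint, the sorted order of $L_i\cup L_j$ is obtained by a linear merge, and each of the $|L_i|+|L_j|$ curves triggers at most one overlap resolution at cost $O(1)$. The total number of pieces is bounded by the number of distinct interval endpoints, which is $O(|L_i|+|L_j|)$, so the operation runs in $O(|L_i|+|L_j|)$ time. I expect the alignment argument in condition (3)—that every final piece sees constant input values, so a single max-weight is well defined—to be the subtle point that needs the most care to state precisely.
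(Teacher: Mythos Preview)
Your proposal is correct and follows essentially the same route as the paper. The paper verifies condition~(3) by an explicit induction over the insertion steps, checking each of $c_L,c_M,c_R$ in turn, while you take the cleaner global view of $\oplus$ as the pointwise maximum of two piecewise-constant functions and verify condition~(3) once for the final pieces; the underlying argument (the curve from the higher-weight side remains maximal in $T_{i,j}$ because any competitor is bounded by the max) is identical. One small point worth making explicit in your write-up: when you invoke condition~(3) of $(L_s,\weighting_s)$ at the point $p$, you are using that the final piece $\hat c$ containing $p$ is, by the very definition of $c_M$ (and trivially for $c_L,c_R$), a sub-curve of a curve in $L_s$, so $p$ really does lie on a curve of $L_s$.
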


\begin{proof}
  First of all, the set $L$ contains only curves that are pairwise
  free from any superpositions. This directly follows from the
  construction that curves~$c$ and $c'$ superposing each other are
  replaced by three superposition-free curves $c_L$, $c_M$ and
  $c_R$. Due to $I(c_L)\cup I(c_M) \cup I(c_R)=I(c)\cup I(c')$ the
  first and second condition of a linearization is satisfied.

  We finally prove that Condition~(\ref{lin:cond3}) of a linearization is satisfied by
  doing an induction over the curves inserted to~$L$. Let~$L^k$ be~$L$
  after the $k$-th insertion step. Since $L^0$ is empty,
  the condition obviously holds for $L^0$. So
  assume that we insert $c$ to $L^k$ obtaining the set $L^{k+1}$.
  Without loss of generality assume
  that~$c\in L_i$. If $c$ does not superpose any curve in $L^k$, the
  condition directly follows from the definition of~$c$.  So assume
  that~$c'\in L^k$ superposes $c$. Since~$c\in L_i$, the curve~$c'$ is
  contained in~$\E(T_j)$. We remove $c'$ from~$L^k$ and insert the
  curves $c_R$, $c_M$ and $c_L$ as defined above. We prove that all
  three curves satisfy Condition~(\ref{lin:cond3}).

  Consider in the following the subtree~$T_{i,j}$ of $T_u$ restricted
  to the edges of~$T_i$ and~$T_j$. We set $c_R= c[I_R]$ and set
  $\weighting(c_R)=\weighting_{i}(c)$, if $I_R\subseteq I(c)$.  In
  that case there is no $e$-rooted curve~$\ell\subseteq \E(T_j)$
  with~$\length(\ell)+\dist_u \in I_R$, i.e., either there is no
  curve~$\ell$ in $\E(T_j)$ with $t(\ell)=\E(u)$ and
  $\length(\ell)+\dist_u \in I_R$, or any curve in $\E(T_j)$ with
  $t(\ell)=\E(u)$ and $\length(\ell)+\dist_u \in I_R$ ends on a
  junction edge.  Consequently, any $e$-rooted curve $\ell$
  with~$\length(\ell)+\dist_u \in I_R$ and in particular any maximal
  $e$-rooted curve~$\ell$ with~$\length(\ell)+\dist_u \in I_R$ lies in
  $\E(T_i)$. Thus, the curve~$c_R$ satisfies
  Condition~(\ref{lin:cond3}). For the case $I_R\subseteq I(c')$ and
  the curve $c_L$ we can argue analogously.

  So consider the curve~$c_M$. Without loss of generality we assume
  that~$\weighting_i(c)\geq \weighting(c')$. The opposite case can be
  handled analogously. For any maximal $e$-rooted curve~$\ell$ in
  $\E(T_j)$ with $\length(\ell)+\dist_u \in I_M$ it must be true that
  $\weight(\OPT(\ell))\leq \weighting(c_M)$.  Further,
  since~$c_M\subseteq c$ and $c$ satisfies condition~(\ref{lin:cond3})
  with respect to~$T_i$, $c_M$ satisfies the
  condition~(\ref{lin:cond3}) with respect to~$T_{i,j}$. \qed
\end{proof}

Lemma~\ref{lem:tree:construct1} and Lemma~\ref{lem:tree:construct2}
yield that $\bigoplus_{i=1}^k (L_i,\weight_i)$ is the linearization of
$e$ without any restrictions. Computing it needs
$O(\sum_{i=1}^k|L_i|)$ time.

Note that when computing optimal candidates (see \emph{Application
  of linearizations}) we are only interested in $e$-rooted
curves~$\ell$ that have length at most $\lw(R)$, where $R$ is the road
of~$e$. Hence, when constructing $(L_i,\weighting_i)$ for an
edge~$e_i$ in the first step, we discard any curve~$c$ of~$L_i$ that
does not allow an $e$-rooted curve that both ends on~$c$ and has
length at most~$\lw(R)$; the curve $c$ is not necessary for our
purposes.  Hence, we conceptually restrict $T_i$ to the edges that are
reachable from~$u$ by one label length. It is not hard to see
that~$T'$ restricted to $\E(T_i)$ contains only $O(n)$ vertices,
because each vertex of~$V'\setminus V$ is induced by a chain of
tightly packed vertical labels, whereas each label has length
$\lw(R)$. Hence, $T'$ restricted to $\E(T_i)$ contains for each such
chain at most one vertex of $V'\setminus V$. Further, the endpoints
of the curves in~$L_i$ are induced by the vertices of $T'$. Hence, by
discarding the unnecessary curves of~$L_i$ the set $L_i$ has size
$O(n)$. Altogether, by Lemma~\ref{lem:tree:construct2} and due to the
constant degree of each vertex we can construct $\bigoplus_{i=1}^k
(L_i,\weight_i)$ in $O(\sum_{i=1}^k n)=O(n)$ time.

When constructing $\OPT(u)$ for $u$ as described in
Algorithm~\ref{algo:basic-approach}, we first build the
linearization~$L_{e}$ of each of $u$'s outgoing edges. By
Lemma~\ref{lem:tree:apply-lin} we can find in $O(n)$ time the optimal
candidate of $u$. Then, due to the previous reasoning, the
linearization of an edge of $T$ and the optimal candidate of a
vertex~$u$ can be constructed in~$O(n)$ time. Altogether we obtain the
following result.

\begin{proposition}
  \label{prop:tree}
  For a road map~$\mathcal M$ with a tree~$T$ as underlying road graph,
  \MaxTotalCovering can be solved in~$O(n^3)$ time.
\end{proposition}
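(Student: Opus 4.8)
The plan is to establish Proposition~\ref{prop:tree} by showing that Algorithm~\ref{algo:basic-approach} can be implemented to run in $O(n^3)$ time, where the improvement over the naive $O(n^5)$ bound comes entirely from speeding up \OPTCAND$(u)$ to $O(n)$ per vertex via the linearization machinery developed above. Since the subdivision tree $T'$ has $O(n^2)$ vertices, a per-vertex cost of $O(n)$ for computing optimal candidates would already give $O(n^3)$; the remaining bottleneck is constructing the linearizations themselves, which must also fit within the same budget. First I would recall the correctness guarantee: by Lemma~\ref{lem:tree:basic-approach} it suffices to compute an optimal candidate at each vertex in a bottom-up traversal, and by Lemma~\ref{lem:tree:apply-lin} this candidate can be extracted in $O(n)$ time \emph{provided} the linearizations of $u$'s outgoing edges are available.

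The heart of the argument is therefore a careful accounting of the cost of building all linearizations. The key structural observation, already noted in the text, is that when restricting attention to $e$-rooted curves of length at most $\lw(R)$, the portion of $T'$ reachable from $u$ within one label length contains only $O(n)$ relevant vertices: each chain of tightly packed vertical labels contributes at most one vertex of $V'\setminus V$ inside this range, since consecutive chain labels each have length $\lw(R)$. Consequently each discarded-and-pruned set $L_i$ has size $O(n)$. I would then invoke Lemma~\ref{lem:tree:construct1} to extend each child-edge linearization $(L,\weighting)$ to a restricted linearization $(L_i,\weighting_i)$ in time $O(|L_i|)=O(n)$, and Lemma~\ref{lem:tree:construct2} to combine them via $\bigoplus_{i=1}^k(L_i,\weighting_i)$ in total time $O(\sum_{i=1}^k |L_i|)=O(n)$, using that each vertex of $T'$ has constant degree so that $k=O(1)$. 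This yields an $O(n)$ construction time for the linearization of each edge of $T$.

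Putting the pieces together, I would argue that in the bottom-up traversal we process each of the $O(n^2)$ vertices of $T'$ exactly once; for each such vertex we build the linearizations of its (constantly many) outgoing edges in $O(n)$ time and then run \OPTCAND$(u)$ in $O(n)$ time by Lemma~\ref{lem:tree:apply-lin}. This gives $O(n^2)\cdot O(n)=O(n^3)$ total time, establishing the proposition. I expect the main obstacle to be the size bound on the linearizations: one must be confident that the pruning step (discarding curves that admit no $e$-rooted curve of length at most $\lw(R)$ ending on them) genuinely collapses the potentially $O(n^2)$ curves arising from the quadratic subdivision tree down to $O(n)$, and that this pruning is consistent across the recursive $\oplus$-combinations so that no later merge reintroduces superfluous curves. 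The delicate point is that the relevant reachable subtree within one label length inherits only one $V'\setminus V$ vertex per chain; once that is firmly in hand, the $O(n)$ bounds on both construction and application, together with the constant-degree assumption, make the overall time bound follow by a straightforward summation over the vertices of $T'$.
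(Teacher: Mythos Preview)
Your proposal is correct and follows the paper's approach: the $O(n^3)$ bound arises from the $O(n^2)$ vertices of $T'$ each incurring $O(n)$ work in \OPTCAND via Lemma~\ref{lem:tree:apply-lin}, with linearizations built in $O(n)$ time per edge via Lemmas~\ref{lem:tree:construct1} and~\ref{lem:tree:construct2} together with the pruning-to-one-label-length argument you highlight. One minor imprecision worth noting is that linearizations are constructed only for the $O(n)$ edges of the original tree $T$ (not at every vertex of $T'$), so their total construction cost is actually $O(n^2)$ rather than $O(n^3)$; this only strengthens your bound and does not affect correctness.
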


\subsection{Improvements on Storage Consumption}
Since~$T'$ contains $O(n^2)$ vertices, the algorithm needs~$O(n^2)$
space. This can be improved to~$O(n)$ space. To that end~$T'$ is
constructed \emph{on the fly} while executing
Algorithm~\ref{algo:basic-approach}. Parts of~$T'$ that become
unnecessary are discarded. We prove that it is sufficient to store $O(n)$ vertices of~$T'$ at any
time such that the optimal labeling can still be constructed.

When constructing the optimal labeling of~$T$, we build for each
edge~$(u,v)$ of $T$ its linearization based on the linearization of
the outgoing edges of $v$.
\begin{wrapfigure}[13]{l}{3.7cm}
  \centering 
  \includegraphics[page=10]{fig/tree1.pdf}
  \caption{\small Vertices not reachable from $u$ are marked gray.}
  \label{fig:tree:reachable}
\end{wrapfigure}
  Afterwards we discard the linearizations
of those outgoing edges.  Since each vertex has constant degree,
considering the vertices of $T'$ in an appropriate order, it is
sufficient to maintain a constant number of linearizations at any
time.

Hence, because each linearization has size $O(n)$, we need $O(n)$
space for storing the required linearizations in total. However, we
store for each vertex~$u$ of~$T'$ the weight~$\weight(\OPT(u))$ and
its optimal candidate.  As~$T'$ has size $O(n^2)$ the space
consumption is $O(n^2)$. In the following we improve that bound
to~$O(n)$ space.

\noindent We call a vertex~$v\in V'$ \emph{reachable} from a vertex~$u\in V'$,
if there is a curve~$\ell\subseteq \E(T'_u)$ that starts at~$\E(u)$
and that is contained in the embedding of a road~$R$ with
$\lw(R)\geq \length(\ell)$ such that~$\E(v)\in \ell$ or $v\in
N(\ell)$, where~$\length(\ell)$ denotes the length of~$\ell$; see
Fig.~\ref{fig:tree:reachable}. The set~$\R_u$ contains all vertices
of~$T'_u$ that are reachable from~$u$. The next lemma shows
that~$\R_u$ has linear size.

\newpage
\begin{lemma}\label{lem:tree:reachable}
  For any vertex~$u$ of $T'$ the set $\R_u$ has size~$O(n)$.
\end{lemma}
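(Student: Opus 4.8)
The plan is to bound the size of $\R_u$ by relating reachable vertices of $T'$ to the structure of the original tree $T$ together with the single-label-length reachability constraint. Recall that $V'$ consists of the original vertices $V$ of $T$ plus the subdivision vertices $V'\setminus V$, and that each subdivision vertex is the tail of some label in a chain of tightly packed vertical labels, each of length $\lw(R)$ for its road $R$. The key observation is that the definition of reachability only admits curves $\ell$ whose length is at most one label length $\lw(R)$. Thus, I would first argue that the portion of $\E(T'_u)$ relevant to $\R_u$ is ``shallow'': it reaches at most one label length away from $\E(u)$ along any road contained in $T'_u$.

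First I would count the original vertices of $T$ that can be reachable. Since $T$ is a tree with $O(n)$ vertices and $m=O(n)$ edges, at most $O(n)$ vertices of $V$ lie in $\R_u$; this is immediate because $\R_u\subseteq V'$ and $V\subseteq V'$. The real work is to bound the subdivision vertices in $\R_u$, i.e.\ those in $(V'\setminus V)\cap\R_u$. For these I would reuse the argument already sketched in the excerpt for restricting $T_i$ to vertices reachable by one label length: each chain of tightly packed vertical labels contributes \emph{at most one} subdivision vertex within reach of $\E(u)$, because consecutive tails in a chain are separated by a full label length $\lw(R)$, and a reachable curve $\ell$ has total length at most $\lw(R)$. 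Hence a curve emanating from $\E(u)$ can cross at most one tail of any given chain before exhausting its length budget.

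The central counting step is therefore: the number of distinct chains whose tails can be reached from $u$ is $O(n)$, since each regular vertex of $V$ starts exactly one chain and there are $O(n)$ regular vertices, and each such chain contributes at most one vertex to $\R_u$. Combining the $O(n)$ original vertices with the $O(n)$ subdivision vertices (one per relevant chain) yields $|\R_u|=O(n)$. I would make the ``at most one tail per chain'' claim precise by noting that if two tails $t(\ell_i)$ and $t(\ell_j)$ of the same chain with $i<j$ were both reachable via a curve of length $\le\lw(R)$ starting at $\E(u)$, then since $\dist(t(\ell_i),t(\ell_j))\ge(j-i)\,\lw(R)\ge\lw(R)$ and both lie on the path from $\E(u)$, the reaching curve would need length exceeding $\lw(R)$, a contradiction.

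The main obstacle I anticipate is carefully handling the children $N(\ell)$ of reachable curves, since reachability also counts vertices $v\in N(\ell)$, not just vertices lying on $\ell$ itself. A reachable curve $\ell$ may branch at junction vertices, so its set of children can be large if the curve passes through many junctions. However, since each vertex has constant degree and $\ell$ has bounded length $\lw(R)$, the number of junctions $\ell$ can traverse is controlled by how many subdivision and original vertices $\ell$ contains, which we have just bounded; thus the total number of children across all reachable curves is still $O(n)$. Making this branching bookkeeping rigorous — ensuring that summing children over all reachable curves does not overcount and stays linear — is the delicate part, but it reduces to the same ``one label length'' budget argument applied along each root-to-descendant path in $T'_u$.
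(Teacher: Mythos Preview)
Your approach is essentially the same as the paper's: bound $|\R_u|$ by showing that each of the $O(n)$ chains used to build $T'$ contributes only $O(1)$ vertices to $\R_u$. However, your specific claim ``at most one tail per chain'' is slightly too strong, and your contradiction argument has a gap precisely at the point you flagged as the main obstacle.

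Concretely, take $u$ itself to be a chain tail $v_2$, with $v_1$ the next tail of the same chain further from the root, so $\dist(v_1,v_2)=\lw(R_C)$. The curve of length exactly $\lw(R_C)$ starting at $\E(u)$ along $R_C$ has $\E(v_1)$ as its head, so both $v_1$ and $v_2$ lie in $\R_u$. Your contradiction ``the reaching curve would need length exceeding $\lw(R)$'' fails here because the length is exactly $\lw(R_C)$, not strictly greater. The paper therefore proves the bound \emph{at most two} vertices per chain, by assuming three tails $v_1,v_2,v_3$ of a chain lie in $\R_u$ and deriving a contradiction. The role of $N(\ell)$ is handled cleanly: since all three lie on the same root-to-leaf path, only the farthest one, $v_3$, can possibly occur as a child in $N(\ell')$ rather than on $\ell'$ itself; hence $\E(v_1),\E(v_2)\in\ell'$ with $\E(v_2)\neq h(\ell')$, forcing $\dist(v_1,v_2)<\lw(R_C)$, the desired contradiction. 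This is exactly the ``branching bookkeeping'' you were worried about, and it resolves to a single extra vertex per chain rather than a separate summation argument. With ``one'' replaced by ``two'' your write-up matches the paper.
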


\begin{proof}
  Recall how $T'$ is constructed: For each vertex $v \in V$ we
  construct a chain~$C$ of tightly packed vertical valid labels,
  which starts at $\E(v)$, is directed towards $\rho$, and ends when
  either the road ends, or adding the next label does not increase
  the number of identified road sections.
  Each label of such a chain $C$ induces one vertex of
  $T'$. Hence,~$C$ induces a set~$V_C$ of vertices in $T'$. We show
  that for each chain $C$ the set $V_C\cap \R_u$ contains at most
  two vertices. As we construct $n$ chains in order to build $T'$
  the claim follows.

  For the sake of contradiction assume that there is a chain $C$ and
  a vertex~$u$ in $T'$ such that~$V_C\cap \R_u$ contains more than
  two vertices. Without loss of generality we assume that $V_C\cap
  \R_u$ contains three vertices, which we denote by $v_1$, $v_2$ and
  $v_3$. We further assume that $\dist_{v_1} < \dist_{v_2} <
  \dist_{v_3}$.  By construction all labels in~$C$ lie in the
  embedding of the same road~$R_C$, and $\dist(v_1,v_2)\geq
  \lw(R_C)$ and $\dist(v_2,v_3)\geq \lw(R_C)$. By definition of $C$
  there is a vertical curve~$\ell\in \E(T'_u)$ that starts at
  $\E(u)$ and contains $v_1$, $v_2$ and $v_3$.  Let~$e$ be the
  outgoing edge of~$u$ in $T'$ whose embedding is covered by $\ell$ and
  consider the sub-curve~$\ell' \subseteq \ell$ with
  length~$\lw(R_C)$ that starts at~$u$. By definition of~$\R_u$, we
  know for each~$v_i$ with $1\leq i \leq 3$ that either its
  embbeding is contained in~$\ell'$ or $v_i\in N(\ell')$. From the
  definition of~$N(\ell')$ and the fact that all three vertices lie
  on~$\ell$, it directly follows that only~$v_3$ may be contained
  in~$N(\ell')$. Hence, $\E(v_1),\E(v_2)\in \ell'$. Further,
  because~$v_2\not \in N(\ell')$, we have $\E(v_2)\neq h(\ell')$,
  which implies $\dist(v_1,v_2)<\lw(R)$ and contradicts
  $\dist(v_1,v_2)\geq\lw(R)$.\qed
\end{proof}

Assume that we apply Algorithm~\ref{algo:basic-approach} considering
the vertex~$u$.  When constructing $u$'s optimal candidate, by
Lemma~\ref{lem:tree:reachable} it is sufficient to consider the
vertices of~$T'_u$ that lie in~$\R_u$.  On that account we discard
all vertices of $T'_u$ that lie in $V'\setminus V$, but not
in~$\R_u$.
\begin{wrapfigure}{l}{3cm}
  \centering \vspace{-0.6cm}
  \includegraphics[page=11]{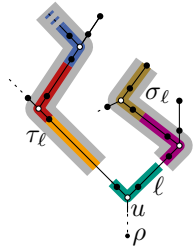}
  \caption{\small Chains of label~$\ell$.}
  \label{fig:tree:reconstructing}
  \vspace{-0.4cm}
\end{wrapfigure}
  Further, we compute the vertices of $V'\setminus V$ that
subdivide the incoming edge~$(t,u)\in E$ \emph{on demand}, i.e., we
compute them, when constructing the optimal candidate of~$t$. Hence,
we have linear space consumption.

\noindent However, when discarding vertices of $T'$, we lose the
possibility of reconstructing the labeling. We therefore annotate
each vertex~$u\in V$ of the original tree~$T$ with further
information. To that end consider a canonical labeling~$\mathcal L$
of~$T$.  Let~$\ell$ be a horizontal label of~$\mathcal L$ and
let~$e$ be the edge of~$T$ on which~$\ell$'s head is
located. Either, no other label of~$\mathcal L$ ends on $e$, or
another label~$\ell'$ ends on $e$ that belongs to a
chain~$\sigma_\ell$ of tightly packed vertical labels. Analogously,
we can define the chain~$\tau_\ell$ with respect to edge~$e'$ on
which $\ell's$ tail is located. On that account we store for a
junction vertex~$u\in V$ not only its optimal candidate~$\ell \in
C(u)$, but also the two chains~$\sigma_\ell$ and $\tau_\ell$, if
they exist. Note that such a chain of tightly packed vertical labels
is uniquely defined by its start and endpoint, which implies that
$O(1)$ space is sufficient to store both chains. Using a
breadth-first search we can easily reconstruct those chains in
linear time. For a regular vertex~$u\in V$ we analogously store
$\sigma_\ell$ of its optimal candidate~$\ell\in C(u)$, if it
exists. Since~$\ell$ is vertical, we do not need to consider its
tail. For the special case that~$\ell=\bot_u$, we define
that~$\sigma_\ell$ is the chain of tightly packed vertical labels
that ends on the only outgoing edge~$e$ of~$u$. Summarizing, the
additional information together with the optimal candidates stored
at the vertices~$u\in V$ of the original tree are sufficient to
reconstruct the labeling of~$T$.
Together with Proposition~\ref{prop:tree} we obtain the following result.

\begin{theorem}
  \label{thm:tree}
  For a road map~$\mathcal M$ with a tree~$T$ as underlying road graph,
  \MaxTotalCovering can be solved in~$O(n^3)$ time using~$O(n)$ space.
\end{theorem}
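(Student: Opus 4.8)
The plan is to derive the statement by pairing the $O(n^3)$ running-time bound of Proposition~\ref{prop:tree} with a fresh $O(n)$ space bound; since the running time is already settled, the real work is to show that the linearization-based realization of Algorithm~\ref{algo:basic-approach} can be run in linear space. The obstruction is that the subdivision tree $T'$ has up to $\Theta(n^2)$ vertices, so neither $T'$ nor the per-vertex optimal-candidate-and-weight data $\weight(\OPT(u))$ for all of $T'$ may be stored explicitly. I would therefore build $T'$ \emph{on the fly} during the bottom-up traversal and discard each datum as soon as it stops influencing later computations.

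First I would bound the working space. Each linearization has size $O(n)$: when forming $\bigoplus_{i=1}^k(L_i,\weighting_i)$ one discards every curve on which no rooted curve of length at most one label length can end, so the surviving curve endpoints are induced by the $O(n)$ vertices of $T'$ that lie within one label length of the edge's lower endpoint. Since every vertex of $T'$ has constant degree and we only ever combine the linearizations of a single vertex's outgoing edges, a suitable bottom-up processing order keeps only $O(1)$ linearizations alive at once, contributing $O(n)$ space. For the vertices of $T'$ themselves I would invoke Lemma~\ref{lem:tree:reachable}: when $u$ is processed, its optimal candidate depends only on the $O(n)$ vertices reachable from $u$, so every vertex of $V'\setminus V$ in $T'_u$ outside the reachable set may be discarded, while the subdivision vertices of the incoming edge $(t,u)$ are regenerated on demand when the parent $t$ is processed. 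Thus at any instant only $O(n)$ vertices of $T'$ and $O(1)$ linearizations reside in memory.

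The step I expect to be the crux is \emph{reconstruction}. Once the $\Theta(n^2)$ auxiliary vertices of $V'\setminus V$ are thrown away, the explicit head and tail positions of the chosen labels are lost, so I must show that the labeling produced by the full-memory algorithm can still be recovered from $O(n)$ surviving data. The idea is to persistently annotate only the $O(n)$ vertices of the \emph{original} tree $T$: at each such $u$ I store its optimal candidate $\overline\ell\in C(u)$ together with the chain(s) of tightly packed vertical labels abutting the edge(s) carrying the head and, for a junction, the tail of $\overline\ell$. The observation that makes this affordable is that such a chain is uniquely determined by its start- and endpoint, so it needs only $O(1)$ space and can be re-expanded by a breadth-first search in linear time. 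By the correctness of the canonical-labeling framework (Lemmas~\ref{lem:tree:canonical-labeling} and~\ref{lem:tree:basic-approach}) these stored candidates and chains describe exactly an optimal labeling.

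Putting the pieces together, the traversal still visits the $O(n^2)$ vertices of $T'$ and spends $O(n)$ time per optimal candidate by Lemma~\ref{lem:tree:apply-lin}, so the running time remains $O(n^3)$ as in Proposition~\ref{prop:tree}; meanwhile the on-the-fly construction, the $O(1)$ live linearizations, the reachable-set bound, and the $O(n)$ per-vertex annotations together bound the space by $O(n)$. This establishes the theorem.
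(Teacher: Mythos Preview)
Your proposal is correct and follows essentially the same approach as the paper: you combine Proposition~\ref{prop:tree} for the time bound with an on-the-fly construction of $T'$, maintain only $O(1)$ linearizations of size $O(n)$ each, invoke Lemma~\ref{lem:tree:reachable} to discard unreachable subdivision vertices, and recover the labeling by annotating each original vertex of $T$ with its optimal candidate plus the $O(1)$-sized chain descriptors (start- and endpoint), exactly as the paper does. The only minor point is that the paper also spells out the special case $\ell=\bot_u$ for regular vertices when defining the stored chain $\sigma_\ell$, but this is a detail and your outline captures all the essential ideas.
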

 
\section{On the Usefulness of Labeling Tree-Shaped Road Networks}\label{sec:appendix:experiments}
Although the underlying road graphs of real-world road maps are rarely
trees, our algorithm for labeling trees is still of practical interest as
we show in first initial experiments. The obtained data shall give the reader
evidence of the practicability and relevance of our algorithm, but
they are not yet a complete experimental study. For a companion
paper we are working on a detailed evaluation of our approach and are
investigating several practical heuristics that are based on the tree labeling algorithm.

To evaluate the usefulness of our algorithm we considered the road networks
of several large cities. We extracted the road graphs from the data provided by
OpenStreetMap\footnote{\url{openstreetmap.org}} and drew them
mimicking the style used on \url{openstreetmap.org} as standard. In
particular, we adapted the zoom level $17$, which maps $50m$ to $65$
pixels.

On each road graph we first applied a simple preprocessing strategy
removing and cutting road sections that can be labeled trivially
without violating any optimal solution. In particular we applied the
following rules.
\begin{compactenum}
  \item Remove any road that contains exactly one road section.
  \item Remove any road section that is sufficiently long to completely contain a label and whose adjacent road sections are also sufficiently long to completely contain a label. Here two road sections are called \emph{adjacent}, if they are connected by a path containing only junction edges.  
  \item Cut any road section into two halves that is sufficiently long to contain a label twice in a row. 
\end{compactenum}

\newcommand*\pct{\scalebox{.9}{\%}}

\begin{table}[htb]
\caption{Number of connected subgraphs and road sections for road networks of five cities. 
The column \emph{subgraphs} contains the number of connected subgraphs into which the graph is decomposed after preprocessing: 
1.\ the total number of subgraphs,
2.\ the number of trees,
3.\ the number of subgraphs with one cycle, and
4.\ the number of subgraphs with more than one cycle.
The
column \emph{road sections} contains the number of road sections
1.\ in total, 2.\ matched by the preprocessing strategy, 3.\ contained in trees,
4.\ contained in subgraphs with one cycle and 
5.\ contained in subgraphs with more than one cycle. 
   }
\label{table:properties}
\centering
\begin{tabular}{ccccccccccc}
     \toprule
\multirow{2}{*}{\textbf{Number of}}     && \multicolumn{4}{c}{subgraphs (after preprocessing)} & \multicolumn{5}{c}{road sections}\\
\cmidrule(lr){2-6}
\cmidrule(lr){7-11}
 &&  total & trees & 1 cycle &  $\geq 2$ cycles & total& matched & trees &1 cycle & $\geq 2$ cycles\\
\midrule
\multirow{2}{*}{Berlin}
 && $5702$	&$4853$	&$549$	&$300$	&$49773$	&$36021$	&$8220$	&$2170$	&$3362$\\
 && $100$\pct	&$85.1$\pct	&$9.6$\pct	&$5.3$\pct	&$100$\pct	&$72.4$\pct	&$16.5$\pct	&$4.4$\pct	&$6.8$\pct\\
\midrule
\multirow{2}{*}{Paris}
 && $22929$	&$20604$	&$1742$	&$583$	&$145971$	&$81305$	&$48009$	&$8329$	&$8328$\\
 && $100$\pct	&$89.9$\pct	&$7.6$\pct	&$2.5$\pct	&$100$\pct	&$55.7$\pct	&$32.9$\pct	&$5.7$\pct	&$5.7$\pct\\
\midrule
\multirow{2}{*}{London}
 && $21825$	&$20538$	&$1012$	&$275$	&$143856$	&$91405$	&$44845$	&$4485$	&$3121$\\
 && $100$\pct	&$94.1$\pct	&$4.6$\pct	&$1.3$\pct	&$100$\pct	&$63.5$\pct	&$31.2$\pct	&$3.1$\pct	&$2.2$\pct\\
\midrule
\multirow{2}{*}{Los Angeles}
 && $48248$	&$47131$	&$767$	&$350$	&$397505$	&$268334$	&$113842$	&$5149$	&$10180$\\
 && $100$\pct	&$97.7$\pct	&$1.6$\pct	&$0.7$\pct	&$100$\pct	&$67.5$\pct	&$28.6$\pct	&$1.3$\pct	&$2.6$\pct\\
\midrule
\multirow{2}{*}{New York City}
 && $10318$	&$9817$	&$306$	&$195$	&$108417$	&$72057$	&$25549$	&$3011$	&$7800$\\
 && $100$\pct	&$95.1$\pct	&$3$\pct	&$1.9$\pct	&$100$\pct	&$66.5$\pct	&$23.6$\pct	&$2.8$\pct	&$7.2$\pct\\
\bottomrule
\end{tabular}
\end{table}

That preprocessing strategy decomposed the road graphs into a large
number of subgraphs; see Table~\ref{table:properties}. For example,
for the road network of London, which consists of $143856$ road
sections, the rules of the preprocessing strategy matched $91405$ road
sections, so that the road graph decomposed into $21825$
subgraphs. Note that if we are able to label those subgraphs
optimally, we obtain an optimal labeling for the whole road network by
the choice of the preprocessing rules. Table~\ref{table:properties}
further shows that most of those subgraphs are trees ($85.1\pct$ for
Berlin as a minimum and $97.7\pct$ for Los Angeles as a
maximum). Hence, using our tree labeling algorithm we can label a
large number of the remaining subgraphs optimally. We conjecture that
using the preprocessing strategy in combination with the tree labeling
algorithm and some heuristics or exact methods for the non-tree
subgraphs we can label real-world instances near-optimally. This
hypothesis is also supported by the observation that most of the road
sections are either matched by the preprocessing strategy or are
contained in trees ($55.7\pct+32.9\pct=88.6\pct$ for Paris as a
minimum and $67.5\pct+28.6\pct=96.1\pct$ for Los Angeles as a
maximum).  For our planned companion paper we are currently working on
corresponding experiments investigating that conjecture. Further, we
are developing heuristics and exact algorithms for labeling the
remaining non-tree subgraphs.

For example we can improve our results by adapting our tree
labeling algorithm to subgraphs containing exactly one cycle $C$. We
observe that there are three cases for such a subgraph:
\begin{inparaenum}[(1)]
 \item no label identifies any road section of $C$,
 \item there is a label~$\ell$ that identifies only road sections of
   $C$, or
 \item there is a label~$\ell$ that identifies both road sections of
   $C$ and road sections of the remaining component.
\end{inparaenum}
In the first case we can remove $C$ completely from the
subgraph, such that it decomposes into a set of trees. In the second and third
case the label~$\ell$ splits the cycle $C$ so that the remaining road
sections form trees. We explore all choices of $\ell$ taking the best
choice. Hence, we can label subgraphs containing exactly one cycle optimally,
which further increases the number of optimally labeled subgraphs
($92.8\pct$ for New York City as a minimum and $97.8\pct$ for London as a
maximum).

\section{Conclusions and Outlook}\label{sec:conclusion}
In this paper we investigated the problem of maximizing the number of
identified road sections in a labeling of a road map; we showed that
it is \NP-hard in general, but can be solved in $O(n^3)$ time and
linear space for the special case of trees.

The underlying road graphs of real-world road maps are rarely
trees. Initial experimental evidence indicates, however, that road maps can be
decomposed into a large number of subgraphs by placing trivially
optimal road labels and removing the corresponding edges from the
graph. It turns out that between 85.1\% and 97.7\% of the resulting
subgraphs are actually trees, which we can label optimally by our
proposed algorithm. As a consequence, this means that a large fraction
(between 88.6\% and 96.1\%) of all road sections in our real-world road
graphs can be labeled optimally by combining this simple preprocessing
strategy with the tree labeling algorithm.  We are investigating
further heuristic and exact approaches for labeling the remaining
non-tree subgraphs (e.g., by finding suitable spanning trees and
forests) for a separate companion paper.

% ---------------------------- Bibliography
% -------------------------------

% Please add the contents of the .bbl file

\newpage

\newpage
\appendix

\section{Computational Complexity}

\subsection{Description of an Alternative Clause Gadget}
\label{apx:alt-clause}

In this section we describe a clause gadget that can be used as an
alternative to the one presented in Section~\ref{sec:np-hardness}. Since it
consists only of roads that are paths, this gadget strengthens Theorem~\ref{thm:npc}.

\begin{theorem}\label{thm:npc_stronger}
  For a given road map $\mathcal M$ and an integer $K$ it is \NP-hard to decide if in total at least $K$ road sections can be identified, even if all roads are paths.
\end{theorem}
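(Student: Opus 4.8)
The plan is to reuse the entire reduction from Theorem~\ref{thm:npc} verbatim, changing only the clause gadget. The variable, chain, and fork gadgets already consist exclusively of roads that are simple paths (straight horizontal or vertical segments), and the only road in the construction that fails to be a path is the \emph{central clause road}, which is a degree-three star whose three arms meet in a single road vertex of degree three. Hence it suffices to exhibit a replacement clause gadget that (i) uses only path-roads, (ii) connects to the three incoming literal branches through gate junctions exactly as before, and (iii) admits a labeling identifying its full quota of sections if and only if at least one of the three literal junctions is left free (literal true), losing exactly one section when all three are blocked (all literals false). Given such a gadget, the reduction argument is unchanged: the target is set to $K = K'' + t\cdot m$, where $t$ is the per-clause quota of the new gadget and $K''$ the total quota of the (unchanged) variable gadgets, and one argues as before that a labeling meeting $K$ exists iff $\varphi$ is satisfiable.

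For the design, the key primitive I would use is a \emph{two-input OR}: a single path-road containing a short central section $s$ flanked by two gate junctions $J_a$, $J_b$, with $|s|$ smaller than the label length and the two outer sections each long enough to host a label. Since $s$ is too short to host a label by itself, it can be identified only by a label that spans $J_a$ (reaching into the left outer section) or spans $J_b$; such a label exists exactly when the corresponding gate is free. Thus $s$ is identifiable iff $J_a$ or $J_b$ is free, which is precisely a two-way disjunction realized by a path. To obtain a three-way disjunction I would nest two such primitives: build $\mathrm{OR}(J_1,J_2)$, transmit its state along a short chain gadget to a fresh auxiliary gate $J_{12}$, and feed $J_{12}$ together with $J_3$ into a second primitive $\mathrm{OR}(J_{12},J_3)$. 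The composed gadget then fires iff $(J_1 \vee J_2)\vee J_3$, i.e.\ iff at least one literal is true, and every section of it is a path-section by construction.

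I would carry out the verification in the same style as the chain and fork analyses. First I would fix the label length and all section lengths (outer sections equal to one label length, the critical inner sections strictly shorter, and the chain spacing chosen so that the relay chain has exactly its two optimal transmitting states), and then enumerate the locally optimal labelings of the gadget. The two directions are: if some literal junction is free, explicitly place labels covering every section, matching the quota; conversely, if all three are blocked, a counting argument over the junctions available to the gadget's labels shows that at least one short section is necessarily left unidentified, so the quota is missed by one. I would also check that the gadget embeds into the monotone rectilinear layout of $H_\varphi$ used in Theorem~\ref{thm:npc} (positive clauses above the variable line, negative below), so that planarity and polynomial size are preserved.

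The main obstacle is the loss of the degree-three center's routing flexibility. In the star gadget a single label can connect \emph{any} two of the three inner arms through the center, so whichever literal is true the remaining two inner sections are covered by one shared label; a label on a path-road, by contrast, cannot turn at a junction, so this ``cover any two of three'' behaviour cannot be reproduced by a single shared central junction. A naive two-road gadget sharing one central junction fails precisely when the free literal sits on the two-arm road, because the two still-blocked inner sections then contend for the same central junction. The delicate part is therefore the coupling between the OR-primitives and the transmitting chain: I must ensure that ``the primitive fires'' forces exactly the chain state read by the next stage, and that no unintended labeling of the combined gadget identifies more sections than the intended optimum (which would break the reverse implication of the reduction). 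Getting this signal-propagation interface right, and re-proving the two-state optimality statement for the enlarged gadget, is where the real work lies.
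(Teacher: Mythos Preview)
Your approach differs from the paper's. The paper does not nest binary ORs; it builds a single monolithic ten-road gadget in which one long path $r$ carries three densely packed \emph{blocks} of junctions, one block per literal, and each gate $g_a,g_b,g_c$ talks to $r$ through a pair of auxiliary paths that each cross $r$ twice. The three-way disjunction is then realized directly: whichever gate is open, the label on $r$ is placed across the corresponding block, and a fixed number of sections (the same for each of the three symmetric cases) is identified. No relay chain and no signal conversion between stages is needed.

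Your compositional plan has a concrete gap you have not closed. Take the assignment where $J_1,J_2$ are blocked but $J_3$ is free. In your first primitive $\mathrm{OR}(J_1,J_2)$ the short section $s_1$ cannot be identified at all, since both flanking junctions are occupied, so the gadget already misses one section even though the clause is satisfied. The second primitive may still fire via $J_3$, but that does not recover $s_1$. Hence the gadget's score depends on \emph{which} literal is true, not merely on the truth of the disjunction, and your stated property (iii) (``full quota iff at least one literal true'') fails. The root cause is exactly the interface issue you flag but leave open: your 2-OR outputs a \emph{count} signal (is $s$ identified?), whereas chains carry a \emph{position} signal (is a junction blocked?). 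The phrase ``transmit its state along a short chain gadget'' hides precisely the converter you would need, and any such converter must additionally compensate for the lost $s_1$ in the $(F,F,T)$ case so that all three satisfied patterns score identically. That is not impossible, but it is the heart of the construction rather than a routine detail, and nothing in the sketch supplies it. The paper's symmetric three-block design sidesteps both the conversion and the compensation problem.
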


The clause gadget consists of ten roads, $r$, $g_a$, $g_b$, $g_c$
,$a_i$, $b_i$ and $c_i$ with~$i\in\{1,2\}$ that all are paths; see
Fig.~\ref{fig:np:clause2}.  Going along~$r$ from one end to the other,
the junctions with the roads $a_i$, $b_i$ and $c_i$ ($1\leq i \leq 2$)
occur in three densely packed blocks.  The blocks are described by the
sequence of roads intersecting~$r$. The first block
is~$B_a=(a_1,c_2,b_1,a_2)$, the second block is
$B_b=(a_2,b_1,c_1,b_2)$ and the third block is
$B_c=(b_2,c_1,c_2,a_1)$. The label length of~$r$ is chosen so that at
most three labels can be placed on $r$, but each road section is
shorter than a label of~$r$. Choosing the length of the road sections
appropriately, we further ensure that we can place a label that
crosses all junctions of one of the blocks without crossing the
junctions of another block.

We now describe junctions of the roads~$g_a$, $g_b$, $g_c$ ,$a_i$,
$b_i$ and $c_i$ with~$i\in\{1,2\}$.  The road~$a_1$ first intersects
$g_a$ and then $r$ twice. Let~$s^1_{a_1}$, $s^2_{a_1}$, $s^3_{a_1}$
and~$s^4_{a_1}$ denote these road sections in that particular
order. The length of~$s^1_{a_1}$ is chosen so that a single label can
be placed on~$s^1_{a_1}$, while the others are shorter than the
label length of~$a_1$. More specifically, we define $a_1$'s label
length such that a label identifies the sections in either
$\{s_{a_1}^1\}$, $\{s_{a_1}^1,s_{a_1}^2\}$,
$\{s_{a_1}^1,s_{a_1}^2,s_{a_1}^3\}$,
$\{s_{a_1}^2,s_{a_1}^3,s_{a_1}^4\}$ or $\{s_{a_1}^3,s_{a_1}^4\}$. We
define the intersections and the label length for~$a_2$,
analogously. Further,~$g_a$ intersects $a_1$ and $a_2$ in one
junction, i.e., the edge of~$g_a$ connecting both junction vertices is
a junction edge. The label length of~$g_a$ is chosen so that a label
can cross~$g_a$'s only junction. The length of~$g_a$'s road sections
is at least as long as $g_a$'s label length.  We call $g_a$ a
\emph{gate}, because later this road will be connected to the end road
of a chain by a junction; see violet square in
Fig.~\ref{fig:np:clause2:structure}. For $b_1$, $b_2$, $c_1$, $c_2$ we
introduce analogous junctions and road sections, however, $b_1$ and $b_2$
intersect~$g_b$ instead of~$g_a$, and $c_1$ and $c_2$ intersect~$g_c$
instead of~$g_a$.

\newcommand{\scaleClause}{0.9}

\begin{figure}[tb]
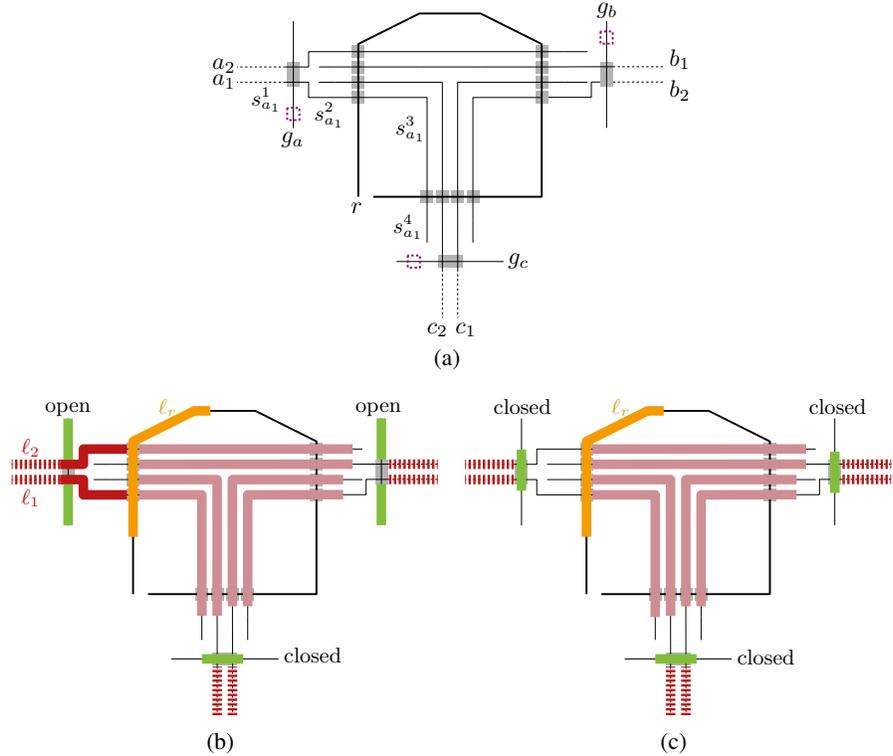

  \centering \subfigure[]{
    \includegraphics[page=8,scale=\scaleClause]{./fig/np-gadgets}
    \label{fig:np:clause2:structure}
  } \subfigure[]{
    \includegraphics[page=9,scale=\scaleClause]{./fig/np-gadgets}
    \label{fig:np:clause2:true}
  } \subfigure[]{
    \includegraphics[page=10,scale=\scaleClause]{./fig/np-gadgets}
    \label{fig:np:clause2:false}
  }
  \caption{Illustration of alternative clause gadget, which only uses
    paths as roads. \protect\subref{fig:np:clause2:structure}
    Structure of the clause
    gadget.\protect\subref{fig:np:clause2:true} Optimal labeleling for
    the case that at least one literal is
    \emph{true}.\protect\subref{fig:np:clause2:false} Optimal labeling
    for the case that all literals are \emph{false}. }
  \label{fig:np:clause2}
\end{figure}

In order to identify both road sections of a gate, either two labels
can be placed on the road sections separately, or one label that goes
through the junction.  In the former case the gate is \emph{open} and
in latter case it is \emph{closed}; see
Fig.~\ref{fig:np:clause2:true}.  We observe that it only makes sense
to close a gate, if at least one road section of the gate does not
allow to place a label that is only contained in that road section.
This case will occur if and only if the connected chain
transmits the value \emph{false} to the clause.

Assume that at least one gate is open, i.e., one literal of the clause
is true; see Fig.~\ref{fig:np:clause2:true}. Without loss of
generality let~$g_a$ be open. We place a label~$\ell_r$ on $r$ such
that it crosses the junctions of block $B_a$ and identifies 5
sections. Since~$g_a$ is open, we can place a label~$\ell_1$ that
identifies~$s^1_{a_1}$ and $s^2_{a_1}$. Analogously, we can place a
label~$\ell_2$ identifying $s^1_{a_2}$ and $s^2_{a_2}$. Placing further
labels as indicated in Fig.~\ref{fig:np:clause2:true}, we identify five road
sections of~$r$ and all road sections of any other road except for~$s^4_{c_2}$, $s^4_{b_1}$.
Hence, 33 road sections are identified.

We observe that we can place the labels of $b_1$, $b_2$, $c_1$, $c_2$
such that they do not cross the junctions of~$g_b$ and $g_c$,
respectively. Hence, it does not matter whether $g_b$ and $g_c$ are
closed or open, i.e., it does not matter whether the corresponding 
literals are \emph{true} or \emph{false}.

We now argue that this is an optimal labeling. If~$s^4_{c_2}$ or
$s^4_{b_1}$ were labeled, the label $\ell_r$ must be placed such that
the junctions of $r$ with~$c_2$ and $b_1$ are not crossed,
respectively. This decreases the number of identified road sections as
least as much identifying $s^4_{c_2}$ and $s^4_{b_1}$ increases the
number of identified road sections. In order to identify at least one
of the unidentified road sections of~$r$, we need to place a label
that crosses~$B_b$ or $B_c$. Obviously, this yields a smaller number
of identified road sections than 31.

Finally, assume that all gates are closed; see
Fig.~\ref{fig:np:clause2:false}. Consider, the same labeling as
before. However, this time we cannot label~$s^2_{a_1}$ and $s^2_{a_2}$
anymore. Hence, this labeling has only 29 identified road sections. Obviously,
it cannot be improved by changing the placement of the remaining labels or adding 
labels.

\end{document}